\newcommand{\ds}{\displaystyle}
\newcommand{\bc}{\begin{center}}
\newcommand{\ec}{\end{center}}
\newcommand{\1}{\mathbf{1}}
\newcommand{\E}{\mathbb{E}}
\renewcommand{\P}{\mathbb{P}}
\newcommand{\C}{\mathbb{C}}
\newcommand{\R}{\mathbb{R}}
\newcommand{\Prob}{{\mathbb P}}
\newcommand{\N}{\mathbb{N}}
\newcommand{\cF}{\mathcal{F}}
\newcommand{\cN}{\mathcal{N}}
\newcommand{\Z}{\mathbb{Z}}
\newcommand{\pare}[1]{\left ( #1 \right )}
\newcommand{\croc}[1]{\left [ #1 \right ]}
\newtheorem{theorem}{Theorem}[section]
\newtheorem{lemma}[theorem]{Lemma}
\newtheorem{proposition}[theorem]{Proposition}
\theoremstyle{definition}
\newtheorem{definition}[theorem]{Definition}
\newtheorem{remark}[theorem]{Remark}
\begin{document}

\title{
A probabilistic representation of the solution to a 1D evolution equation 
in a medium with negative index
}

\author{%
\'Eric Bonnetier\footnote{Universit\'e Grenoble-Alpes, Institut Fourier,  CS 40700, 38058 Grenoble Cedex 9, France. Email: 
    \textsf{eric.bonnetier@univ-grenoble-alpes.fr}}
\and
  Pierre~Etor\'e$(\star)$\footnote{Universit\'e Grenoble-Alpes, CNRS, Inria, Grenoble INP, LJK, 700 Avenue Centrale, 38401 St Martin D'Hères, France. Email:
    \textsf{pierre.etore@univ-grenoble-alpes.fr} $(\star$: corresponding author).}
  \and 
 Miguel~Martinez\footnote{Universit\'e Gustave Eiffel, LAMA, 5, boulevard Descartes
  77454 Marne-la-Vallée cedex 2,
    France. Email: \textsf{miguel.martinez@univ-eiffel.fr}}
}

\maketitle

\abstract{
In this work we investigate a 1D evolution equation involving a divergence form operator where the diffusion coefficient inside the divergence is changing sign, as in models for metamaterials.
We focus on the construction of a fundamental solution for the evolution equation,
which does not proceed as in the case of standard parabolic PDE's, since the associated
second order operator is not elliptic. 
We show that a spectral representation of the semigroup associated to the equation can be derived, 
which leads to a first expression of the fundamental solution. 
We also derive a probabilistic representation in terms of a pseudo Skew Brownian Motion (SBM).
This construction generalizes that derived from the killed SBM when the diffusion coefficient
is piecewise constant but remains positive.
We show that the pseudo SBM can be approached by a rescaled pseudo asymmetric random walk,
which allows us to derive several numerical schemes for the resolution of the PDE
and we report the associated numerical test results.
}

\vspace{1cm}

\begin{center}
{\bf Keywords:} \\

Negative Index Materials; Evolution equations; Spectral representation of semigroups; Skew Brownian motion ; Pseudo processes
\end{center}

\section{Introduction}

Over the last two decades, negative index materials (NIM)
have drawn considerable attention, due to the spectacular way in which electromagnetic,
acoustic or elastic waves may propagate in such media.
Composite materials built as mixtures of NIM's and classical dielectric material 
may indeed show resonant effects when excited at certain frequencies, in which
electromagnetic fields may concentrate near the interfaces, providing a way to
channel the fields. The amplitude of the fields may be enhanced by several orders of magnitude 
in the neighborhood of particles with negative permittivity or permeability.
The possibility of localizing and concentrating waves has  made NIM's a subject of great 
interest for many applications, for instance in communication and medical imaging.
\medskip

From the point of view of mathematical modeling, a large part of the work on NIM's has
focused on the so-called electrostatic approximation, for a composite medium made
of inclusions (or phases) of NIM's embedded into a homogeneous dielectric matrix phase.
This means that the time-harmonic Maxwell system is reduced to a diffusion for one of the 
components of either the electric or magnetic fields, provided the geometry of the device 
is assumed to have a direction of invariance, and provided the typical dimensions of the
inhomogeneities are small with respect to the incident wavelength.
In the simplest case, one remains with a transmission equation of the form
\begin{equation}\label{eq1}
\left\{\begin{array}{c}
\textrm{div}(A(x) \nabla u(x)) \;=\; f
\\
+ \;\textrm{boundary conditions}
\end{array}\right.
\end{equation}
in a bounded domain $D$ (one could also consider this equation in the whole space),
and where the conductivity $A(x)$ takes negative values in the inclusions of NIM's and a positive 
value  in the dielectric medium in which they are embedded.
\medskip

Because the conductivity changes sign in the domain, the bilinear form associated 
to the above PDE
\begin{eqnarray*}
a(u,v) &=& \int_D A(x) \nabla u(x) \nabla v(x),
\end{eqnarray*}
fails to be coercive in the natural Sobolev space $H^1(D)$ or in the appropriate
subspace $H$ that accounts for the imposed boundary conditions. 
Thus, one cannot invoke the Lax-Milgram Lemma to show existence of solutions to~(\ref{eq1}).
In many cases though, depending on the geometry of the NIM inhomogeneities,
one can show that the form $a$ is T-coercive~: there exists an invertible 
operator $T:H \longrightarrow H$ such that $(u,v) \in H \times H \longrightarrow a(u,Tv)$ is coercive.
In this case, applying the Lax-Milgram Lemma to the latter bilinear form
yields well-posedness of the PDE~\cite{BonnetChesnelCiarlet_1, BonnetChesnelCiarlet2,HMNguyen},
except possibly for some values of the negative conductivities, for which the associated
operator is not invertible and may even loose its Fredholm character.
\medskip

T-coercivity is thus a simple way of paliating the lack of coercivity. 
One of our objectives is to investigate whether T-coercivity also 
grants that the operator in~(\ref{eq1}) shares other characteristics of elliptic 
operators.
In particular, given the relation of the latter to stochastic processes,
we would like to answer the following questions~:
Does there exists a probabilistic representation of the solutions 
to~(\ref{eq1}) akin a Feynman-Kac formula ? What is the nature of
the underlying stochastic process ?
Can one design Monte Carlo type numerical schemes to approximate 
the solutions to~(\ref{eq1})~?
\medskip


In this work, we investigate these questions in a simple one-dimensional situation.
Since the probabilistic interpretation of an elliptic PDE is strongly related
to the parabolic equation whose infinitesimal generator is the associated elliptic 
operator, we consider a parabolic version of~(\ref{eq1}) of the form
\begin{eqnarray} \label{eq_evol}
\left\{ \begin{array}{lcll}
A(x)\partial_t u(t,x) &=& \ds\frac{1}{2}\partial_x\Big(A(x)\partial_x u(t,x)\Big),
&\quad x \in I, t > 0,
\\[8pt]
u(0,x) &=& u_0(x) & \quad x \in I,
\\[8pt]
u(t,\pm1) &=& 0, & t > 0,
\end{array}\right.
\end{eqnarray}
where $I$ is the interval $(-a,a)$, and where the conductivity $A$ is defined by
\begin{eqnarray*}
A(x) &=& \left\{\begin{array}{ll}
k, & x \in I^- := (-a,0) \\
1, & x \in I^+ := (0,a).
\end{array}\right.
\end{eqnarray*}
In other words, we assume that a dielectric with (positive) conductivity 1 fills in the right
part of the interval I, whereas when $k < 0$, the left part is filled with a negative
index material. 
Note that the time derivative of $u$ is multiplied by the conductivity $A(x)$, and 
that weak solutions to~\eqref{eq_evol} satisfy
 \begin{eqnarray}
\label{eq:evol-nondiv}
\left\{\begin{array}{lcll}
 \partial_tu(t,x) &=&\frac1 2 u^{\prime\prime}(t,x) ,& x\in(-a,0)\cup(0,a),\;\,t>0
 \\
 u(0,x)&=&u_0(x), & x\in(-a,a)
 \\
 u(t,a) &=& u(-t,a) \;=\; 0 &
 \end{array}\right.
\end{eqnarray}
and the transmission conditions at the interface
\begin{eqnarray}
\label{eq:CL-evol}
\left\{\begin{array}{cclcl}
u(t,0^-) &=& u(t,0^+),
\\
k \, u^\prime(t,0^-)&=& u^\prime(t,0^+).
\end{array} \right.
\end{eqnarray}

When $k > 0$, the coefficient in front of the time derivative of $u$ does
not bring significant changes to the usual parabolic setting,
and existence and uniqueness of solutions are guaranteed.
When $k < 0$ however, a `parabolic' version of~(\ref{eq1}) of the form
\begin{eqnarray*}
B(x)\partial_t u(t,x) &=& \ds\frac{1}{2}\partial_x\Big(A(x)\partial_x u(t,x)\Big),
\end{eqnarray*}
with initial datum~$u_0 \in L^2(I)$, may only have solutions with finite energy 
(i.e. in the space $L^2(0,\infty,H^1_0(I))$) if for any $x \in I$, $B(x)$ and $A(x)$ 
have the same sign~\cite{AmmariBonnetierDuca}, so here we simply choose~$B=A$.
Our objective is to construct a fundamental solution for~(\ref{eq_evol})
which can be interpreted as the transition function of a stochastic process,
or rather as we explain below, of a pseudo-process. We explicitely
characterize the associated measure, and inquire how it can be decomposed
so as to build probabilistic-like numerical schemes to compute the solutions
to~\eqref{eq:CL-evol}.
\medskip


When $k > 0$, it is well known that the fundamental solution of \eqref{eq_evol} can be interpreted as the transition function of a stochastic process : this can be shown for example by using the stochastic calculus for Dirichlet forms (see for e.g. \cite{Fukushima-et-al-2011}). 
The solution $u(t,x)$ of~\eqref{eq:evol-nondiv}-\eqref{eq:CL-evol} can be represented via a Feynman-Kac formula as the expectation 
$\E^x(u_0(\check{X}_t))$  where $\check{X}$ stands for a Skew Brownian motion (SBM) of parameter $\beta = \frac{1-k}{1+k}$, 
killed when reaching $-a$ or $a$ (we have denoted~$\E^x(\cdot)=\E(\cdot|\check{X}_0=x)$; 
see e.g. \cite{kara},  and~\cite{harrison-shepp} for an account on the Skew Brownian motion).
Thus the transition probability density function (transition function, in short) of $\check{X}$ provides a fundamental solution of~\eqref{eq_evol}.
\medskip

When $k < 0$, the skew Brownian motion may not be well defined. However, the above construction formally provides a solution to \eqref{eq:evol-nondiv}-\eqref{eq:CL-evol} that generalizes 
the probabilistic representation. 
With this perspective, we consider a {\it pseudo} expectation involving 
a {\it pseudo} Skew Brownian Motion, which we properly define as a {\it pseudo} process,
i.e., as a family of measurable functions  $Y=(Y_t)$ 
defined on a space $(\Omega,\cF)$ endowed with a signed measure $\P$ with $\P(\Omega)=1$.
The functions $Y=(Y_t)$ are called {\it pseudo}-random variables $Y=(Y_t)$,
and $(\Omega,\cF,\P)$ is referred to as a {\it pseudo}-probability space.
See~\cite{lachal2} and the references therein for an account on pseudo random variables. 
A {\it pseudo} process is defined mainly by its transition function, which allows to compute 
{\it pseudo} transition probabilities.
\medskip

When $I= \R$ ($a=+\infty$), the construction of a pseudo SBM can be successfully carried out, 
taking advantage of the fact that in the case $k>0$, 
the transition function of the SBM is known  (\cite{walsh}).
It follows that one can indeed construct a fundamental solution of~\eqref{eq_evol} 
and define a pseudo SBM in the general case $k\in\R^*\setminus\{-1\}$. 
Moreover, the (pseudo) probabilistic representation obtained in this way
naturally lends itself to numerical approximation, via the convergence of pseudo skew 
random walks to pseudo skew Brownian motions.  
\medskip

The bounded case, where $I=(-a,a), a<+\infty$, is more involved, since 
it is not clear how one could define a 'killed' pseudo skew Brownian motion 
as a pseudo-probabilistic process.
This is  mainly because 'killing' is a trajectorial operation, and because the trajectories 
of a pseudo processes do not have a clear meaning. 
We are however able to extend the representation of the solution $u(t,x)$ 
of~\eqref{eq:evol-nondiv}-\eqref{eq:CL-evol} 
in terms of the fundamental solution, when $k\in\R^*\setminus\{-1\}$. 
One preliminary step in this direction, consists in computing the transition function 
of the  killed skew Browian $\check{X}$, by probabilistic arguments in the case $k > 0$.
\medskip

This work is organized as follows. In Section~2, we construct solutions to~(\ref{eq_evol}) 
in~$L^2(0,\infty,H_0^1(I))$ for any $k \neq -1$, with the help of the 
eigenfunctions of the bilinear form $a(u,v) = \ds\int_I A(x) \partial_x u \partial_x v$
which are shown to form a basis of the space $H^1_0(I)$.
Section~3 focuses on the unbounded case when $I =\R$ (or $a = \infty$).
We show that the solutions to~(\ref{eq_evol}) can be obtained as
convolutions of the initial datum with a kernel~$\check{p}(t,x,y)$ which
has a pseudo probabilistic interpretation, and to which we associate 
a pseudo skew-Brownian motion. 
We further show that such pseudo skew-Brownian motion can be approximated 
by pseudo-skew random walks.
In Section~4, we consider the bounded case ($a < \infty$).
Finally, in Section~5, we construct several numerical schemes for solving~(\ref{eq_evol})
based on these developments and report numerical results obtained with these schemes.
\bigskip

\section{Spectral representation of the semigroup}
\label{sec:spectral}

In this section, we consider the eigenvalue problem~: find $\lambda \in \mathbb{C}$
and $u \in H^1_0(I)$ such that
\begin{eqnarray}\label{eq_eig}
- \Big(A(x) u^\prime(x) \Big) &=& \lambda^2 A(x) u(x), \quad\quad \textrm{in}\; I,
\end{eqnarray}
Seeking a solution to~(\ref{eq_eig}) in the form
\begin{eqnarray}\label{form_u}
u(x) &=& \left\{\begin{array}{lcl}
a_1 \cos(\lambda x) + b_1 \sin(\lambda x), &\quad& x \in I^-,
\\
a_2 \cos(\lambda x) + b_2 \sin(\lambda x), &\quad& x \in I^+,
\end{array}\right.
\end{eqnarray}
and expressing the transmission and boundary conditions \eqref{eq:CL-evol}, one is led
to solving the linear system
\begin{eqnarray*}
\left(\begin{array}{cccc}
1 & 0 & -1 & 0 \\
0 & k & 0 & -1 \\
\cos(\lambda a) & -\sin(\lambda a) & 0 & 0 \\
0 & 0 & \cos(\lambda a) & \sin(\lambda a)
\end{array}\right) 
\left(\begin{array}{c} a_1 \\ b_1 \\ a_2 \\ b_2 \end{array} \right)
&=& 0,
\end{eqnarray*}
from which one obtains the dispersion relation
\begin{eqnarray}\label{eq_disp}
\sin(2 \lambda a) (k+1) &=& 0.
\end{eqnarray}
Assuming $k\neq 1$, the associated eigen-elements can be grouped in two families : a set of even functions
\begin{eqnarray*}
f_{k,n}(x) &=& \cos\Big(\frac{(2n-1)\pi}{2a} x\Big), \quad x \in I,
\end{eqnarray*}
associated to $\lambda_n = \ds\frac{(2n-1)\pi}{2a}$, $n\geq 1$,
and the set of functions
\begin{eqnarray*}
g_{k,n}(x) &=& 
\left\{ \begin{array}{rl}
\sin(\frac{n \pi}{a} x) &\quad \textrm{if}\; x \in I^-,
\\[8pt]
k \sin(\frac{n \pi}{a} x) & \quad \textrm{if}\; x \in I^+,
\end{array}\right.
\end{eqnarray*}
associated to $\mu_n = \ds\frac{n \pi}{a}$, $n\geq 1$.
\medskip

When $k > 0$, the bilinear form $u,v \rightarrow \ds\int_I A(x) u(x) v(x)\,dx$ is 
a scalar product in $L^2(I)$, while the form
\begin{eqnarray*}
u,v \in H^1_0(I) &\rightarrow& a(u,v) = \ds\int_I A(x) u^\prime(x) v^\prime(x)\,dx,
\end{eqnarray*}
is coercive and symmetric in $H^1_0(I)$, and it is well-known that the solutions to~(\ref{eq_eig}) 
form a basis of $L^2(I)$ and of $H^1_0(I)$.
The next Proposition shows that this is still the case when $k < 0$.

\begin{proposition}\label{prop_basis}
When $k < 0$, the functions $(f_{k,n}, g_{k,n})_{n \geq 1}$ form a Hilbert basis
of $L^2(I)$ and of $H^1_0(I)$.
\end{proposition}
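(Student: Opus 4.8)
The plan is to by-pass the usual self-adjoint spectral machinery, which is unavailable here: for $k<0$ the natural weighted form $\langle u,v\rangle_A=\int_I A\,uv$ — for which the $f_{k,n},g_{k,n}$ are in fact pairwise orthogonal (distinct eigenvalues) — is indefinite, hence not an inner product, so it produces no orthonormal eigenbasis. Instead I would show that $(f_{k,n},g_{k,n})_{n\ge1}$ is the image of a genuine orthonormal basis under a bounded, boundedly invertible operator; this Riesz-basis property is the natural reading of ``Hilbert basis'' here, since a direct computation shows the family is \emph{not} orthogonal for the standard $L^2$ product when $k\neq1$.

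The first step is to fold the interval onto $I^+=(0,a)$ by reflection: to $u$ I associate the pair $(P,Q)$ with $P=u|_{I^+}$ and $Q(x)=u(-x)$ for $x\in(0,a)$. This identifies $L^2(I)$ isometrically with $L^2(0,a)^2$, and $H^1_0(I)$ with $\{(P,Q)\in H^1(0,a)^2:\ P(0)=Q(0),\ P(a)=Q(a)=0\}$ equipped with $\int_0^a(|P'|^2+|Q'|^2)$. Writing $c_n(x)=\cos(\lambda_n x)$ and $s_n(x)=\sin(\mu_n x)$, the eigenfunctions become
\[
f_{k,n}\longleftrightarrow (c_n,c_n),\qquad g_{k,n}\longleftrightarrow (k\,s_n,\,-s_n).
\]
I would then recall the standard Sturm--Liouville facts: $\{s_n\}$ is an orthogonal basis of $L^2(0,a)$ and of $H^1_0(0,a)$, while $\{c_n\}$ is an orthogonal basis of $L^2(0,a)$ and of $V_0:=\{h\in H^1(0,a):\ h(a)=0\}$ (with the seminorm $\int_0^a|h'|^2$, a norm on $V_0$ by Poincar\'e).

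The key step is the constant change of unknowns $C=\frac{P+kQ}{k+1}$, $S=\frac{P-Q}{k+1}$, which uses $k\neq-1$ and is inverted by $P=C+kS$, $Q=C-S$. Being given by an invertible $2\times2$ matrix, $(P,Q)\mapsto(C,S)$ is bounded with bounded inverse on $L^2(0,a)^2$; moreover it maps the $H^1$ pair-space onto $V_0\times H^1_0(0,a)$, since $S$ vanishes at both endpoints (because $P(0)=Q(0)$ and $P(a)=Q(a)=0$) while $C$ vanishes at $a$. In these coordinates an expansion $u=\sum_n\alpha_nf_{k,n}+\sum_n\beta_ng_{k,n}$ is exactly $C=\sum_n\alpha_nc_n$ together with $S=\sum_n\beta_ns_n$. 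As $\{c_n\},\{s_n\}$ are orthogonal bases of the relevant target spaces, such $(\alpha_n),(\beta_n)\in\ell^2$ exist, are unique, and depend boundedly and invertibly on $(C,S)$, hence on $u$. Composing these isomorphisms exhibits $(f_{k,n},g_{k,n})$ as the image of an orthonormal basis under a bounded invertible operator, both in $L^2(I)$ and in $H^1_0(I)$, which is the claim.

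The main obstacle is exactly the indefiniteness that rules out the elementary orthogonal-eigenbasis argument; everything is therefore routed through the flat ($k=1$) trigonometric bases and the constant linear change of unknowns. The only place the hypothesis enters is the invertibility of that change, whose determinant is proportional to $k+1$, degenerating precisely at the excluded value $k=-1$ — in agreement with the loss of Fredholm character recalled in the introduction. The one point to handle carefully is the bookkeeping of boundary conditions under the reflection, ensuring $S\in H^1_0(0,a)$ and $C\in V_0$ so that the two trigonometric expansions live in the correct function spaces.
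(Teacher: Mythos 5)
Your proof is correct, and it takes a genuinely different route from the paper's. The paper bootstraps from the mirrored positive case: since $-k>0$, the family $(f_{-k,n},g_{-k,n})_{n\geq 1}$ is a basis by the classical weighted Sturm--Liouville theory, $L^2(I)$ splits as $H_f\oplus H_g$ (orthogonally for the scalar product with weight associated to $-k$), and the operator $T$ that flips the sign of the $H_g$-component on $I^+$ carries that basis onto $(f_{k,n},g_{k,n})$; the same argument is then repeated in $H^1_0(I)$. You never invoke the positive case: folding onto $(0,a)$ and the constant change of unknowns $(P,Q)\mapsto(C,S)$ send $f_{k,n}\mapsto(c_n,0)$ and $g_{k,n}\mapsto(0,s_n)$, reducing both statements to the two explicit trigonometric bases (mixed Neumann--Dirichlet cosines for $V_0$, Dirichlet sines for $H^1_0(0,a)$). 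The skeleton is the same in both arguments --- exhibit the family as the image of an orthogonal basis under a bounded invertible map, i.e.\ a Riesz basis, which is indeed the right reading of ``Hilbert basis'' here, since the family is orthogonal only for the indefinite form $\int_I A\,uv$ and not for the flat $L^2$ product --- but your version buys two things. First, it is self-contained and uniform in the two spaces, with the boundary bookkeeping ($S\in H^1_0(0,a)$, $C\in V_0$) made explicit. Second, and more importantly, it makes the invertibility transparent: your determinant is proportional to $k+1$, which localizes exactly why $k=-1$ (implicitly excluded in the proposition, and necessarily so: every $g_{-1,n}$ is an even function, so the family cannot be total) must be discarded. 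This is precisely where the paper's proof is terse: taken literally, the justification $T\circ T=I$ needs care, because the sign-flip on $I^+$ does not map $H_g$ into itself --- it sends $g_{-k,n}$ to $g_{k,n}$, whose $H_f$-component in the decomposition $H_f\oplus H_g$ is nonzero --- so the bijectivity of $T$ must really be checked by solving a $2\times 2$ linear system whose determinant again degenerates at $k=-1$ (or the involution must be interpreted across the two decompositions attached to $k$ and $-k$). Your constant change of variables is the clean form of exactly that check.
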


\begin{proof}
Let $k < 0$. In view of the above remark, the functions 
$(f_{-k,n},g_{-k,n})_{n \geq 1}$ are a basis of $L^2(I)$,
and one has the orthogonal decomposition 
(with respect to the scalar product associated to $-k$)
\begin{eqnarray*}
L^2(I) &=& H_f \oplus H_g,
\end{eqnarray*}
where $H_f$ (resp. $H_g$) is the vector space generated by the $f_{-k,n}$'s
(resp. by the $g_{-k,n}$'s).
Consider the mapping $T~: L^2(I) \longrightarrow L^2(I)$ defined by
\begin{eqnarray*}
Tu(x) &=& \left\{ \begin{array}{rl}
u(x) &\quad \textrm{if}\; u \in H_f,
\\
u(x) &\quad \textrm{if}\; u \in H_g \;\textrm{and}\; x \in I^-,
\\
-u(x) &\quad \textrm{if}\; u \in H_g \;\textrm{and}\; x \in I^+.
\end{array}\right.
\end{eqnarray*}
As $T \circ T = I$, $T$ is an isomorphism, thus the basis $(f_{-k,n},g_{-k,n})$
is transformed into a basis of $L^2(I)$. It is easy to check that 
$(Tf_{-k,n}, Tg_{-k,n}) = (f_{k,n}, g_{k,n})$. The same arguments show that the $(f_{k,n}, g_{k,n})$'s form a basis of $H^1_0(I)$.
\end{proof}

From now on, we assume that $k\in\R^*\setminus\{-1\}$, and we drop the index $k$ in the notation of the
basis functions $f_{k,n}$ and $g_{k,n}$. 
Note that the functions $f_n, g_n, n \geq 1$ satisfy the following relations:
\begin{eqnarray*}
\lambda_p^2
\ds\int_I A(x) f_p(x) \varphi(x) &=& \ds\int_I A(x) f_p^\prime(x) \varphi^\prime(x) \;=\; 0,
\end{eqnarray*}
for $\varphi = f_q, q \neq p$ or $\varphi = g_q, q \geq 1$, and similarly
\begin{eqnarray*}
\mu_p^2
\ds\int_I A(x) g_p(x) \varphi(x) &=& \ds\int_I A(x) g_p^\prime(x) \varphi^\prime(x) \;=\; 0,
\end{eqnarray*}
for $\varphi = g_q, q \neq p$ or $\varphi = f_q, q \geq 1$.
This corresponds to orthogonality or pseudo-orthogonality properties, depending on the sign of $k$.

It follows that any function $u_0 \in L^2(I)$ can be written in the form
\begin{eqnarray} \label{eq_u0}
u_0(x) &=& \sum_{n \geq 1} a_n f_n(x) + b_n g_n(x),
\end{eqnarray}
where the coefficients $a_n$ and $b_n$ are given by
\begin{eqnarray}
\label{eq:def-coeff-ab}
a_n \;=\; \ds\frac{ \ds\int_I A(x) u_0(x) f_n(x) \,dx}
{\ds\int_I A(x) |f_n(x)|^2 \,dx},
&\quad&
b_n \;=\; \ds\frac{ \ds\int_I A(x) u_0(x) g_n(x) \,dx}
{\ds\int_I A(x) |g_n(x)|^2 \,dx}.
\end{eqnarray}
In particular, one can check that
\begin{eqnarray}\label{eq_coeff}
\int_I A(x) |f_n(x)|^2 \, dx 
\;=\; \frac{a(k+1)}{2},
&\quad&
\int_I A(x) |g_n(x)|^2 \, dx 
\;=\; \frac{ak(k+1)}{2}.
\end{eqnarray}
\bigskip

\begin{remark}
Note that if $k=-1$ it is impossible  to provide the forthcoming representations \eqref{eq_sol} and \eqref{Green_spect}, with the $a_n$'s and $b_n$'s computed by \eqref{eq:def-coeff-ab} and \eqref{eq_coeff}. Therefore our assumption $k\in\R^*\setminus\{-1\}$.

\end{remark}

\vspace{0.3cm}

We now focus on the evolution problem~(\ref{eq_evol}) assuming that
$u_0 \in L^2(I)$. Decomposing $u_0$ 
on the basis of eigenfunctions as in~(\ref{eq_u0}), 
it is easy to check that
\begin{eqnarray} \label{eq_sol}
u(t,x) =P_tu_0(x)&=& \sum_{n \geq 1} 
a_n e^{-\lambda_n^2t/2} f_n(x) + b_n e^{- \mu_n^2 t/2} g_n(x),
\end{eqnarray}
is a weak solution to~(\ref{eq_evol}) in the sense that
$u \in L^2(0,\infty,H^1_0(I)), \partial_t u \in L^2(0,\infty, H^{-1}(I))$ and
\begin{eqnarray} \label{def_weak_sol}
\forall\; v \in H^1_0(I), \quad
\ds\int_I 2 A(x) \partial_t u(t,x)v(x) 
+ \ds\int_I A(x) \partial_x u(t,x) \partial_x v(x) &=& 0,
\end{eqnarray}
and $u(0,x) = u_0(x), a.e.\; x \in I$. With this definition, using the $f_n, g_n$'s as
test functions, the weak solution to~(\ref{eq_evol}) is easily seen to be unique.
Note that the family $(P_t)$ forms a semigroup on $L^2(I)$.
Finally, the expression~(\ref{eq_sol}) can be rewritten as
\begin{eqnarray*}
u(t,x) &=&
\int_I u_0(y) \bar{p}(t,x,y) \,dy,
\end{eqnarray*}
where the kernel has the form
\begin{eqnarray} \label{Green_spect}
\bar{p}(t,x,y)
&=& \sum_{n \geq 1} A(y)\Big( \ds\frac{2}{a(k+1)} f_n(y)f_n(x)\, e^{-\lambda_n^2 t/2}
\;+\; \ds\frac{2}{ak(k+1)} g_n(y) g_n(x)\, e^{-\mu_n^2 t/2}\Big).
\end{eqnarray}
We give a probabilistic derivation of the fundamental solution for \eqref{eq_evol} 
in  the next sections.

\section{A probabilistic construction of the fundamental solution
for the evolution equation on $\R$}
\label{sec:sol-fund-R}

In this section, we consider the case when $I = \R$, and obtain the expression 
of the fundamental solution to~\eqref{eq_evol} following a probabilistic construction.
More precisely, we consider the Cauchy problem 
\begin{equation} \label{eq_cauchy}
\left\{ \begin{array}{l}
2A(x) \partial_t u(t,x) \;=\;  (A(x) u^\prime(t,x))^\prime, \quad t > 0, \;x \in \R
\\
u(0,x) \;=\; u_0(x),
\\
u(t,\cdot) \in H^1(\R)\text{ and } Au'(t,\cdot) \in H^1(\R)\quad t > 0,
\end{array} \right.
\end{equation}
or equivalently the PDE
\begin{eqnarray*}
\partial_tu(t,x) &=&\frac1 2 u^{\prime\prime}(t,x) ,\quad x\in(-\infty,0)\cup(0,\infty),\;\,t>0,
\end{eqnarray*}
with the initial condition $u(0,x) = u_0(x), \; x\in\R$ and the 
radiation and transmission conditions
\begin{eqnarray}
\label{eq:CT-cauchy}
\left\{\begin{array}{lclclcl}
 \lim_{|x|\to\infty}u(t,x)  &=&0,&&&&
\\
u(t,0^-) &=& u(t,0^+) &\quad\quad& k u^\prime(t,0^-)&=& u^\prime(t,0^+).
\end{array} \right.
\end{eqnarray}
We first introduce a few notations. 
\vspace{0.4cm}

{\bf Notations.} We denote by $g(t,x,y)=\frac{1}{\sqrt{2\pi t}}\exp\Big(-\frac{|y-x|^2}{2t} \Big)$ the density of a $\cN(x,t)$. 

Let $b < c \in \R$. The transition density of the Brownian motion killed at the points
$b$ or $c$ is given by
\begin{equation}
\label{eq:Bro-killed}
p^{(b,c)}_W(t,x,y):=\sum_{n=-\infty}^\infty\big[g(t,x,y-2n(c-b))-g(t,x,2b-y-2n(c-b))\big]
\end{equation}
(see \cite{borodin}, Appendix I, Nï¿½ 6). Note that one has for any $y\in(b,c)$,
\begin{equation}
\label{eq:Bro-killed-lim}
\lim_{x\xrightarrow[>]{}b}p^{(b,c)}_W(t,x,y)=0\quad\text{  and  }\quad\lim_{x\xrightarrow[<]{}c}p^{(b,c)}_W(t,x,y)=0
\end{equation}

\subsection{The fundamental solution when $k>0$}

We first assume that $k>0$, so that equation~\eqref{eq_evol} is parabolic.
It is known that in this case, under mild assumptions on the initial condition $u_0$
(for example $u_0$ is continuous and bounded, see \cite{lejay-2006})
the solution to the above Cauchy problem is given by
\begin{equation}
\label{eq:FK1}
u(t,x)=\E^x[u_0(X_t)],
\end{equation}
where $X$ is the SBM with parameter $\beta=\frac{1-k}{1+k}\in(-1,1)$. 
We refer to the survey~\cite{lejay-2006} for the definition and main properties of the SBM.
In particular $X$ solves the Stochastic Differential Equation (SDE) with local time
\begin{equation}
\label{eq:SBM}
dX_t=dW_t+\beta dL^0_t(X),
\end{equation}
where $W$ denotes a standard Brownian motion driving the SDE, 
and where $L^0_t(X)$ is the symmetric local time at the point zero and at time $t$ of $X$.
Note that $|\beta|<1$ ensures the existence of $X$, see e.g. \cite{lejay-2006}.
The SBM behaves like a Brownian motion, except at the times when it touches zero, 
at which its dynamics are biased by the term of local time in \eqref{eq:SBM}. 
In particular we have:
 
\begin{lemma}
\label{lem:walsh}[Walsh, \cite{walsh}]
Let $\beta\in (-1,1)$ and let $X$ be the solution to \eqref{eq:SBM}. 
Under $\P^0$ one has:

i) The process $|X|$ is distributed as a reflecting Brownian motion $|W|$ (starting from zero).

ii) The processes  $(\mathrm{sign}(X_t))$ and $(|X_t|)$ are independent.

In addition, for any $t>0$ one has $\P^0(X_t>0)=\frac{1+\beta}{2}$.
\end{lemma}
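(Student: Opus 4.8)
The plan is to deduce (i) from Itô's stochastic calculus, and to obtain (ii) together with the final identity from the excursion-theoretic description of $X$.

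For (i), I would apply Tanaka's formula to $|X_t|$. Since $L^0(X)$ in \eqref{eq:SBM} is the symmetric local time, with the convention $\mathrm{sign}(0)=0$ one has
\[
|X_t|=\int_0^t\mathrm{sign}(X_s)\,dX_s+L^0_t(X).
\]
Substituting \eqref{eq:SBM} and noting that $L^0(X)$ increases only on $\{X_s=0\}$, where $\mathrm{sign}(X_s)=0$, the local-time contribution $\beta\int_0^t\mathrm{sign}(X_s)\,dL^0_s(X)$ vanishes, so that
\[
|X_t|=B_t+L^0_t(X),\qquad B_t:=\int_0^t\mathrm{sign}(X_s)\,dW_s.
\]
The process $B$ is a continuous local martingale with bracket $\langle B\rangle_t=\int_0^t\mathbf{1}_{\{X_s\neq 0\}}\,ds=t$, the last equality holding because $X$ spends no Lebesgue time at the origin; by Lévy's characterization $B$ is a standard Brownian motion. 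Since $|X|\geq 0$ while $L^0(X)$ is nondecreasing, null at $0$, and grows only on $\{|X|=0\}$, the pair $(|X|,L^0(X))$ solves the Skorokhod reflection problem associated with $B$. By uniqueness of that problem $|X|$ is the reflection of the Brownian motion $B$, hence is distributed as the reflecting Brownian motion $|W|$, which is (i).

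For (ii) and the last identity, I would rely on the pathwise construction of the skew Brownian motion underlying Walsh's result. Starting from a reflecting Brownian motion $|W|$, one attaches to its successive excursions away from $0$ an i.i.d. family of signs, equal to $+1$ with probability $p=\frac{1+\beta}{2}$ and to $-1$ otherwise, these signs being drawn independently of $|W|$; multiplying each excursion by its sign yields a process solving \eqref{eq:SBM} with parameter $\beta=2p-1$ and satisfying $|X|=|W|$. The essential point is that the family of excursion signs is independent of the reflecting path $|X|$, which is the substance of (ii), and that, conditionally on $|X|$, the signs carried by distinct excursions are i.i.d. Bernoulli$(p)$ variables. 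Fixing $t>0$ and using $\P^0(X_t=0)=0$, the time $t$ a.s. lies in a single excursion of $|X|$, whose sign is $+1$ with probability $p$ independently of $|X_t|$; this gives the independence of $\mathrm{sign}(X_t)$ and $|X_t|$ and the value $\P^0(X_t>0)=p=\frac{1+\beta}{2}$. As a cross-check, (i) says $|X_t|$ has the reflecting density $2g(t,0,z)$ on $z>0$, so $\P^0(X_t\in dy)=(1+\beta)g(t,0,y)\,dy$ on $\{y>0\}$ and $(1-\beta)g(t,0,y)\,dy$ on $\{y<0\}$, and integrating the first over $y>0$ returns $\frac{1+\beta}{2}$.

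I expect the delicate point to be the independence statement (ii): it is genuinely more than the symmetry argument available at a single time, since the correlation between $\mathrm{sign}(X_s)$ and $\mathrm{sign}(X_t)$ is governed by whether $|X|$ returns to $0$ on $(s,t)$. Its rigorous justification therefore rests on Itô's excursion theory, namely that the excursions of a reflecting Brownian motion form a Poisson point process to which the signs can be attached as independent marks, and that the resulting process indeed solves \eqref{eq:SBM} with the stated $\beta$. One must also be attentive to the convention $\mathrm{sign}(0)=0$ matching the symmetric local time in \eqref{eq:SBM}, which is exactly what makes the $dL^0$ term drop out of Tanaka's formula in the proof of (i).
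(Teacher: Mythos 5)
The paper contains no proof of Lemma \ref{lem:walsh}: it is quoted from Walsh \cite{walsh} and used as a black box later (notably in the proof of Proposition \ref{prop:trans-Xkilled}). Your argument is the standard modern proof and is correct in substance. Part (i) is fine as written: the symmetric convention $\mathrm{sign}(0)=0$ is exactly the one matching the symmetric local time in \eqref{eq:SBM}, so the Tanaka term $\beta\int_0^t \mathrm{sign}(X_s)\,dL^0_s(X)$ does vanish; the bracket computation $\langle B\rangle_t=t$ (using $\langle X\rangle_t=t$ and the occupation time formula to kill the time spent at $0$), L\'evy's characterization, and uniqueness for the Skorokhod reflection problem then give (i). This buys a short, self-contained proof where the paper simply cites the literature.

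Two caveats on part (ii), one cosmetic and one substantive. Cosmetic: you construct \emph{a} process with the desired properties by flipping the excursions of a reflecting Brownian motion and asserting it solves \eqref{eq:SBM} with $\beta=2p-1$; to transfer these properties to \emph{the} solution $X$ of \eqref{eq:SBM} you must also invoke uniqueness in law for \eqref{eq:SBM} (Harrison--Shepp, valid for $|\beta|<1$), which you leave implicit. Substantive, and in your favour: item (ii) read literally as independence of the two \emph{processes} $(\mathrm{sign}(X_t))_{t}$ and $(|X_t|)_{t}$ is actually false, as your own construction reveals --- conditionally on $|X|$ having no zero in $(s,t)$ one has $\mathrm{sign}(X_s)=\mathrm{sign}(X_t)$ a.s., whereas unconditionally the two signs differ with positive probability, so the conditional law of the sign process given $|X|$ genuinely depends on the excursion intervals of $|X|$. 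What is true, and what your excursion-marking argument correctly delivers, is the fixed-time statement: for each $t>0$, the single variable $\mathrm{sign}(X_t)$ is independent of the \emph{whole path} of $|X|$ (not merely of $|X_t|$), with $\P^0(X_t>0)=\frac{1+\beta}{2}$. This stronger-than-marginal, weaker-than-process form is precisely what the paper uses later, when it factors $\P^0\big(|\tilde X_{t-u}|\in -dy;\,\tilde T_{(-a,a)}>t-u;\,\tilde X_{t-u}<0\big)$ through the event $\{\tilde T_{(-a,a)}>t-u\}$, which is measurable with respect to $|\tilde X|$. So your write-up in fact sharpens the statement to the version needed downstream; you should state the fixed-$t$ independence from the full path of $|X|$ explicitly, since independence from $|X_t|$ alone would not justify that computation.
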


From this Lemma and the reflection principle for the Brownian motion,
Walsh was able to give an explicit expression of the transition probability 
density $p(t,x,y)$ of the SBM, in the form~\cite{walsh}
\begin{equation}
\label{eq:walsh1}
p(t,x,y)=\left\{
\begin{array}{lll}
(1-\beta)g(t,x,y)&\text{if}&x\geq 0,\,y<0\\
\\
g(t,x,y)+\beta g(t,x,-y)&\text{if}&x>0,\,y>0\\
\\
(1+\beta)g(t,x,y)&\text{if}&x\leq 0,\,y>0\\
\\
 g(t,x,y)-\beta g(t,x,-y)&\text{if}&x<0,\,y<0.\\
\end{array}
\right.
\end{equation}
Note that \eqref{eq:walsh1} can also be written as
\begin{equation}
\label{eq:walsh2}
p(t,x,y)=\left\{
\begin{array}{lll}
(1-\beta)g(t,x,y)&\text{if}&x\geq 0,\,y<0\\
\\
-\beta [g(t,x,y)-g(t,x,-y)]+(1+\beta)g(t,x,y)&\text{if}&x>0,\,y>0\\
\\
(1+\beta)g(t,x,y)&\text{if}&x\leq 0,\,y>0\\
\\
 \beta[g(t,x,y)-g(t,x,-y)]+(1-\beta)g(t,x,y)&\text{if}&x<0,\,y<0.\\
\end{array}
\right.
\end{equation}
where $g(t,x,y)-g(t,x,-y)=:\check{g}(t,x,y)$, $x,y>0$ (resp. $x,y<0$) can
be interpreted as the transition function of a Brownian motion on 
$(0,\infty)$ (resp. $(-\infty,0)$) killed at zero (see \cite{borodin} 
and Remark \ref{rem:coherent-walsh}).

Yet another way to rewrite \eqref{eq:walsh1} is to consider 
$\hat{g}(t,x,y):=g(t,x,y)+g(t,x,-y)$, $x,y>0$, the transition function 
of a reflected Brownian motion on $[0,\infty)$ (see \cite{borodin}), which yields
\begin{equation}
\label{eq:walsh3}
p(t,x,y)=\left\{
\begin{array}{lll}
(1-\alpha)[\hat{g}(t,x,-y)-\check{g}(t,x,-y)]&\text{if}&x\geq 0,\,y<0\\
\\
\alpha\hat{g}(t,x,y)+(1-\alpha)\check{g}(t,x,y)&\text{if}&x>0,\,y>0\\
\\
\alpha[\hat{g}(t,-x,y)-\check{g}(t,-x,y)] &\text{if}&x\leq 0,\,y>0\\
\\
 (1-\alpha)\hat{g}(t,-x,-y)+\alpha\check{g}(t,-x,-y)&\text{if}&x<0,\,y<0,\\
\end{array}
\right.
\end{equation}
where we have set
\begin{eqnarray}
\label{eq:def-alpha}
\alpha &=& (1+\beta)/2=1/(1+k).
\end{eqnarray}

It follows from~\eqref{eq:FK1} that
\begin{eqnarray} \label{rep_Cauchy}
u(t,x)=\E^x[u_0(X_t)]=\int_\R u_0(y)p(t,x,y)dy,
\end{eqnarray}
so that $p(t,x,y)$ identifies with the fundamental solution of \eqref{eq_cauchy}. 

This fact also follows by mere computation: indeed
one can easily check that the function $p$, defined by~\eqref{eq:walsh1}, satisfies for any $y\in\R$
\begin{equation}
\label{eq:fund-lap-SBM}
\partial_tp(t,x,y)=\frac12\partial^2_{xx}p(t,x,y),\quad \forall x\in(-\infty,0)\cup(0,\infty).
\end{equation}
Using  $\partial_{x}g(t,x,y)=\frac{y-x}{t}g(t,x,y)$ one can check the transmission conditions
\begin{equation}
\label{eq:fund-trans-SBM}
p(t,0-,y)=p(t,0+,y),\quad\quad k\partial_{x}p(t,0-,y)=\partial_{x}p(t,0+,y),
\end{equation}
and the radiation condition is also easy to check:
\begin{equation}
\label{eq:fund-lim-SBM}
\lim_{|x|\to\infty}p(t,x,y)=0.
\end{equation}
Taking derivatives of the integral on the right-hand side of \eqref{rep_Cauchy}
and using~\eqref{eq:fund-lap-SBM}-\eqref{eq:fund-lim-SBM} shows that $p$
is indeed the fundamental solution of \eqref{eq_cauchy}. 
\medskip

\subsection{The fundamental solution in the case $k<0$ and the pseudo SBM}
\label{ssec:pseudo-SBM}

When $k\in\R_-^*\setminus\{-1\}$, setting $\beta=\frac{1-k}{1+k}$,
we may again define a function $p$ as in~\eqref{eq:walsh1}. 
It is easy to check that this function solves 
\eqref{eq:fund-lap-SBM}-\eqref{eq:fund-lim-SBM}, so that
\begin{eqnarray*}
u(t,x) &=& \ds\int_\R u_0(y) p(t,x,y) \,dy,
\end{eqnarray*}
is a solution to the Cauchy problem~\eqref{eq_cauchy}.
Consequently, $p(t,x,y)$ is a fundamental solution to \eqref{eq_cauchy} 
also in the case $k\in \R_-^* \setminus \{-1\}$.
Note that $\beta$~is not in $(-1,1)$, so that $p(t,x,y)\,dy$ is only a 
signed measure, and~\eqref{eq:SBM} does not define a SBM (see e.g.~\cite{legall}). 
Stochastic Differential equations of the type 
\begin{equation}
\label{eq:port}
dX_t=\sigma(X_t)dW_t+\beta dL^0_t(X)
\end{equation}
 with $\beta\notin (-1,1)$ have been addressed for example in
\cite{kopytko}. But this latter work does not allow to take~$\sigma\equiv 1$, $\beta\notin (-1,1)$ and to get a weak solution to \eqref{eq:port} (indeed to touch $\beta\notin (-1,1)$ the coefficient~$\sigma$ has to be different from $1$).
However, we can still define a function~$p(t,x,y)$ by \eqref{eq:walsh1} or~\eqref{eq:walsh2},
to which we can associate a pseudo SBM, as we describe below.
\medskip

By {\it pseudo-random variable}, we mean a measurable function defined 
on a space (so-called pseudo-probability space) endowed
with a signed measure with a total mass equal to 1. 
We observe that in the case $k\in\R_-^*\setminus\{-1\}$, the signed asymmetric heat-type kernel $p(t,x,y)$ defined in~\eqref{eq:walsh2} is no longer positive everywhere, 
however it integrates to 1 w.r.t. $dy$. 
In view of $\eqref{eq:walsh2}$, this is clearly the case when $x=0$. 
When $x>0$, we have
\begin{align*}
\int_{-\infty}^{+\infty} p(t,x,y)dy &= -\beta\int_{0}^{+\infty}[g(t,x,y)-g(t,x,-y)]dy+(1+\beta)\int_{0}^{+\infty}g(t,x,y)dy + (1-\beta)\int_{-\infty}^{0}g(t,x,y)dy\\
&=\beta\int_{0}^{+\infty}g(t,x,-y)dy + \int_{0}^{+\infty}g(t,x,y)dy + (1-\beta)\int_{-\infty}^{0}g(t,x,y)dy\\
&=\int_{-\infty}^{+\infty}g(t,x,y)dy + \beta\int_{0}^{+\infty}g(t,x,-y)dy -\beta\int_{-\infty}^{0}g(t,x,y)dy\\
&= 1 + \beta\pare{-\int_{0}^{-\infty}g(t,x,z)dz -\int_{-\infty}^{0}g(t,x,y)dy} = 1.
\end{align*} 
A similar computation applies to the case $x<0$.
One may also check that the Chapman-Kolmogorov idendity for the family of transition kernels 
$\pare{p(t,x,y)dy}_{x\in \R}$ is also preserved in this case $k<0, k \neq -1$.

Hence, in accordance to the usual Markov rules, we may define the pseudo-skew Brownian motion as the pseudo-Markov process `$(X_t)_{t\geq 0}$' associated with the signed asymmetric heat-type kernel $p(t,x,y)dy$ defined in \eqref{eq:walsh2} -- which is the fundamental solution 
to~\eqref{eq_cauchy} also in the case $k\in\R_-^*\setminus\{-1\}$ -- by~:~
for $t>0$ and $0=t_0<t_1<\dots t_m$ and $x=x_0, x_1,\dots, x_m, y\in \R$,
\[
\P^{x}\pare{X_t\in dy} = p(t,x,y)dy
\]
and
\[
\displaystyle\P^{x}\pare{X_{t_1}\in dx_1, \dots, X_{t_m}\in dx_m} = \prod_{i=1}^m  p(t_{i}-t_{i-1},x_{i-1},x_i)dx_i.
\]
Note that since pseudo-Markov processes are defined in terms of a signed measure, 
it is not clear how one could generalize the definition of the skew Brownian motion 
over all $t \geq 0$ in this context. 
In particular, from a strict probabilistic point of view,  the notion of trajectory for pseudo Markov processes indexed by continuous time does not  have a clear meaning.

For more results on pseudo-Markov processes, we refer to e.g.  \cite{lachal2}.
\vspace{0,3 cm}

We now give a definition, inspired from~\cite{lachal2},
for the convergence of a family of pseudo-processes $((Y_t^{\varepsilon})_{t\geq 0})_{\varepsilon>0}$ towards a pseudo-process $(Y_t)_{t\geq 0}$.

\begin{definition}\label{def_pseudo_weak_cv}
Let $((Y_t^{\varepsilon})_{t\geq 0})_{\varepsilon>0}$ denote a family of pseudo-processes and $(Y_t)_{t\geq 0}$ a pseudo-process.

We write  
\[
(Y_t^{\varepsilon})_{t\geq 0}\xrightarrow[\varepsilon \searrow 0]{\text{pseudo-w}}(Y_t)_{t\geq 0}
\]
if 
\begin{equation}
\label{eq:def-pseudo-conv}
\forall \ell\in \N^{\ast}, \forall t_1,\dots, t_\ell\geq 0, \forall u_1,\dots, u_\ell\in \R,\;\;\E\croc{\exp\pare{i\sum_{j=1}^\ell u_jY_{t_j}^{\varepsilon}}}\xrightarrow[\varepsilon \searrow 0]{}\E\croc{\exp\pare{i\sum_{j=1}^\ell u_jY_{t_j}}}.
\end{equation}
\end{definition}
\begin{remark}
Note that in \eqref{eq:def-pseudo-conv}, the left hand side expectation symbol might depend on $\varepsilon$. But in order to avoid cumbersome notations, and as it will cause no ambiguity in the proofs, we simply denote it by $\E$.
\end{remark}
\vspace{0,3 cm}

\subsection{Convergence of the scaled pseudo asymmetric random walk to the pseudo SBM}
\label{sssec:pseudo-RW}

In this section, we assume that $k\in\R^*\setminus\{-1\}$, and define $\alpha$ 
as in~\eqref{eq:def-alpha}. If $k>0$ then $\alpha\in(0,1)$ and we deal with true processes and random walks. If $k\in\R^*_-\setminus\{-1\}$ then $\alpha\in\R\setminus (0,1)$ and we deal with pseudo processes and pseudo random walks. Our computations englobe both cases, but the results are new for the case $\alpha \in\R\setminus (0,1)$ (for the case 
$\alpha\in(0,1)$ see e.g. \cite{harrison-shepp}).

Before getting deeper into our subject, let us give here a brief account of the several notations that we are going to use in the remaining of this section.

\begin{itemize}[label = $*$]
\item Regarding processes :
\begin{itemize}[label = $-$]
\item $W$ stands for a standard Brownian motion (classical symmetric).
\item $M$ stands for a standard symmetric random walk on $\Z$.
\item $S$ stands for the pseudo random walk on $\Z$ that is $\alpha$-skewed at $0$.
\item $X^n$ stands for the properly normalized pseudo random walk $\alpha$-skewed at $0$.
\item $X$ denotes the pseudo-process under study. 
\item The superscript $\dagger$ denotes killing at zero: this path operation will be only performed for classical processes (either the symmetric random walk $M$ or the Brownian motion $W$).
\end{itemize} 
\end{itemize}
For example, it should be clear that $W^\dagger$ denotes a standard Brownian motion killed when it hits zero for the first time.
\begin{itemize}[label = $*$]
\item Regarding functions : 
\begin{itemize}[label = $-$]
\item The subscript $_{0}$ (or superscript $^{0}$) will always refer to a quantity concerning the classical standard symmetric $\Z$-valued random walk $M$.
\item The hat superscript ' ${\hat{ }}$ ' will always refer to quantities that are related to a reflected process at $0$ (the standard symmetric random walk on $\Z$ or the standard Brownian motion).
\item The check superscript ' ${\check{}}$ ' will always refer to quantities that are related to a process with killing at $0$ (the standard symmetric random walk on $\Z$ or the standard Brownian motion). 
\item The letters $\psi$, $\Psi$ denote characteristic functions.
\item The order of variables in a characteristic function will always be : time / starting point / argument.
\end{itemize}
\end{itemize}
For example, it should be plain that $\check{\psi}_{0}(j;m_0,v)$ refers to the characteristic function of the classical standard symmetric random walk killed at zero and starting from $m_0$, taken at time $j$ and evaluated at $v$, namely $\E^{m_0}[{\exp}(iv\,M^\dagger_j)]$. 
\medskip

These notations will be recalled when needed below.
\medskip
\medskip

Let $m_0\in \Z$ denote an arbitrary integer. Constructed on some pseudo probability space $(\Omega, {\cal F}, {\Prob}^{m_0})$ we consider~$\pare{S_n}_{n\geq 0}$ the pseudo skewed random walk on the integers starting ${\Prob}^{m_0}$-a.s. from $S_0 = m_0$ with pseudo transition probabilities given by
\begin{eqnarray}
\label{eq:pseudo-transitions-srw}
&\Prob^{m_0}\pare{S_{n+1} = S_{n} + 1 | S_0,\dots, S_n} &= \left \{\begin{array}{ll}
\ds \alpha \;\;&\;\;\;\text{if }\;S_n= 0\\
\ds {1}/{2}\;\;&\;\;\;\text{otherwise}.
\end{array}
\right .\\
&\Prob^{m_0}\pare{S_{n+1} = S_{n} - 1 | S_0,\dots, S_n} &= \left \{\begin{array}{ll}
\ds 1-\alpha\;\;&\text{if }\;S_n= 0\\
\ds {1}/{2}\;\;&\text{otherwise}.
\end{array}
\right .
\end{eqnarray}
We attach to $\pare{S_n}_{n\geq 0}$ its natural filtration $\pare{{\cal F}_n^{S}}_{n\geq 0}$ defined by $\mathcal{F}^S_n := \sigma\pare{S_k~:~k\leq n}$ for $n\geq 0$. The pseudo random sequence $\pare{S_n}_{n\geq 0}$ constructed likewise is a pseudo markovian random walk. 

\begin{remark}
\label{Remarque-importante}
Note that -- contrary to the case of pseudo processes in continuous time -- any ${\cal F}_n^S$ adapted functional of the pseudo skewed random walk $\pare{S_k}_{k\geq 0}$ can be expressed as a functional of
$(S_0,S_1,\dots, S_n)$, hence a finite dimensional pseudo distribution functional. Consequently, there is no ambigu\"ity in what is meant by the pseudo law under $\Prob^{m_0}$ of a pseudo skewed random walk on the whole path set $\Z^\N$.
\end{remark}

The following reflection principle is the key to get to the main result of this section.
\begin{lemma}
\label{prop:alphaSRW}
 For all $j\in \N^\ast$ and $m\in \Z^{\ast}$,
 \begin{align}
 \label{eq:alphaSRW-0}
 &\Prob^{0}\pare{S_{j} = m} = \left \{\begin{array}{ll}
 \ds \alpha\Prob^0\pare{|S_{j}| = m}\;\;&\text{if }\;m>0\\
 \\
\ds \pare{1-\alpha}\Prob^{0}\pare{|S_{j}| = |m|}\;\;&\text{if }\;m<0.
\end{array}
\right .
 \end{align}
Moreover, for any integer $m_0>0$
  \begin{align*}
 &\Prob^{m_0}\pare{S_{j} = m} = \left \{\begin{array}{ll}
 \ds \alpha\Prob^{m_0}\pare{|S_{j}| = m} + \pare{1-\alpha}\Prob^{m_0}\pare{|S_{j}| = m \,;\,\forall n\in [\![1,j]\!],\;|S_n|\neq 0}\;\;&\text{if }\;m>0\\
 \\
\ds \pare{1-\alpha}\Big (\Prob^{m_0}\pare{|S_{j}| = |m|} - \Prob^{m_0}\pare{|S_{j}| = m \,;\,\forall n\in [\![1,j]\!],\; |S_n|\neq 0}\Big )\;\;&\text{if }\;m<0
\end{array}
\right .
\end{align*}
 and for any integer $m_0<0$,
 \begin{align*}
 &\Prob^{m_0}\pare{S_{j} = m} = \left \{\begin{array}{ll}
 \ds (1-\alpha)\Prob^{|m_0|}\pare{|S_{j}| = |m|} + \alpha\Prob^{|m_0|}\pare{|S_{j}| = |m| \,;\,\forall n\in [\![1,j]\!],\;|S_n|\neq 0}\!\!\!&\text{if }\;m<0\\
 \\
\ds \alpha\pare{\Prob^{|m_0|}\pare{|S_{j}| = m} - \Prob^{|m_0|}\pare{|S_{j}| = |m| \,;\,\forall n\in [\![1,j]\!],\; |S_n|\neq 0}}\!\!\!&\text{if }\;m>0.
\end{array}
\right .
 \end{align*}
 \end{lemma}
\begin{proof}
We only treat the case where the pseudo random walk starts from $0$.
The other cases, although tedious, can be analysed in the same fashion.

Let $n\in \N^\ast$ and $m\in \Z^{\ast}$ fixed. 

Let us introduce $G_n= \sup\pare{j\leq n~:~S_j=0} = \sup\pare{j\leq n~:~|S_j|=0}$ that is an ${\cal F}_n^S$-measurable functional on the path space. It is easy to check that the pseudo-path $\pare{S_{G_n+j}}_{j\in \{0,\dots, n-{G}_n\}}$ has the same pseudo-law under the pseudo conditional probability $\Prob^{0}\pare{.|S_{G_n+1} = 1}$ as the pseudo-path $\pare{|S_{{G}_n+j}|}_{j\in \{0,\dots, n-G_n\}}$ under the original pseudo probability $\Prob^{0}$ (this assertion makes sense remembering Remark \ref{Remarque-importante}). Observe also that $\Prob^{0}\pare{S_{{G}_{n}+1}=1}=\alpha$.

If $m>0$, then a.s. on the set $\{S_{n} = m\}$, the value of $G_n$ cannot be equal to $n$, so that $G_n\leq n-1$ a.s. conditionally on this set. Hence, for $m>0$ we have 
\begin{align*}
\Prob^{0}\pare{S_{n} = m} &= \Prob^{0}\pare{\{S_{n} = m\}\cap \{S_{G_n+1}=1\}}\\
&=\Prob^{0}\pare{S_{G_n + (n-G_n)} = m~|~S_{G_n+1}=1}\Prob\pare{S_{G_n+1}=1}\\
&=\Prob^{0}\pare{|S_{{G}_n + (n-{G}_n)}| = m}\Prob^{0}\pare{S_{G_n+1}=1}\\
&=\alpha\Prob^{0}\pare{|S_{n}| = m}.
\end{align*}
The case $m<0$ is proved in a similar way. We have $\Prob^{0}\pare{S_{{G}_{n}+1}=1}=k/(1+k)$ and it is easily checked that the pseudo-path $\pare{S_{G_n+j}}_{j\in \{0,\dots, n-G_n\}}$ has the same pseudo-law under the conditionnal probability $\Prob^{0}\pare{.|S_{G_n+1} = -1}$ as the pseudo-path $\pare{-|S_{{G}_n+j}|}_{j\in \{0,\dots, n-{G}_n\}}$ under $\Prob^{0}$ (again, this assertion makes sense because of Remark \ref{Remarque-importante}).
\end{proof}

For an arbitrary $m_0\in \Z$, we denote by $(M_k)_{k\geq 0}$ a standard symmetric random walk on $\Z$ constructed on $(\Omega, {\cal F}, {\Prob}^{m_0})$ such that $M_0=m_0$ ${\Prob}^{m_0}$-a.s. 

Let $b{\cal E}\pare{\Z,\C}$ denote the set of bounded complex valued functions defined on $\Z$. We introduce the following family of operators $\pare{T_j^0}_{j\in \N}$ acting from $b{\cal E}\pare{\Z,\C}$ to $b{\cal E}\pare{\Z,\C}$ and defined by
\begin{align}
T_j^{0}f~:~m\mapsto 
\begin{cases}
\alpha \E^{m}f(|M_j|) + (1-\alpha)\E^{m}f(-|M_j|) + (1-\alpha)\pare{\E^{m}f(|M^\dagger_j|) - \E^{m}f(-|M^\dagger_j|)}\text{if }m\geq 0\\
\alpha \E^{|m|}f(|M_j|) + (1-\alpha)\E^{|m|}f(-|M_j|) + \alpha\pare{\E^{|m|}f(-|M^\dagger_j|) - \E^{|m|}f(|M^\dagger_j|)}\;\;\text{if }m< 0.
\end{cases}
\end{align}
We have the following lemma.
\begin{lemma}
\label{lemma:semi-group-srw}
For any function $f\in b{\cal E}\pare{\Z,\C}$, 
\[
\E^{m}(f(S_j)) = T_j^{0}f(m).
\]
\end{lemma}
\begin{proof}
Let $m_0\in \N$ arbitrary. It is plain from the definition of the transitions of the pseudo skewed random walk $\pare{S_n}_{n\geq 0}$ (given in \eqref{eq:pseudo-transitions-srw}) that $\pare{|S_n|}_{n\geq 0}$ and $\pare{|M_n|}_{n\geq 0}$ share the same pseudo law on the path space $\N^\N$ under $\P^{m_0}$: this assertion makes sense recalling Remark \ref{Remarque-importante}. In particular, we are allowed to rewrite the equalities of Lemma \ref{prop:alphaSRW} in the following manner
  \begin{align}
  \label{eq:loi-classique-remplace-pseudo}
 &\Prob^{m_0}\pare{S_{j} = m} = \left \{\begin{array}{ll}
 \ds \alpha\Prob^{m_0}\pare{|M_{j}| = m} + \pare{1-\alpha}\Prob^{m_0}\pare{|M_{j}| = m \,;\,\forall n\in [\![1,j]\!],\;|M_n|\neq 0}&\!\!\!\!\text{ if }m>0\\
 \\
\ds \pare{1-\alpha}\pare{\Prob^{m_0}\pare{|M_{j}| = |m|} - \Prob^{m_0}\pare{|M_{j}| = |m| \,;\,\forall n\in [\![1,j]\!],\; |M_n|\neq 0}}&\!\!\!\!\!\text{ if }\;m<0.
\end{array}
\right .
\end{align}
For the same reasons, we also have $\Prob^{m_0}\pare{S_{j} = 0} = \Prob^{m_0}\pare{|S_{j}| = 0} = \Prob^{m_0}\pare{|M_{j}| = 0}$.

The announced equality follows then easily by taking the expectation for $f\in b{\cal E}\pare{\Z,\C}$. Indeed, we have
\begin{align*}
\E^{m_0}(f(S_j)) &= \sum_{m\in \Z}f(m)\Prob^{m_0}\pare{S_{j} = m}\\
&=f(0)\Prob^{m_0}\pare{S_{j} = 0} + \sum_{m\in \N^\ast}f(m)\Prob^{m_0}\pare{S_{j} = m} + \sum_{m\in \N^\ast}f(-m)\Prob^{m_0}\pare{S_{j} = -m}.
\end{align*}
Now using \eqref{eq:loi-classique-remplace-pseudo} (and the definition of killing), we find
\begin{align*}
\E^{m_0}(f(S_j))&=f(0)\Prob^{m_0}\pare{|M_{j}| = 0} + \alpha\sum_{m\in \N^\ast}f(m)\Prob^{m_0}\pare{|M_{j}| = m} + (1-\alpha)\sum_{m\in \N^\ast}f(-m)\Prob^{m_0}\pare{|M_{j}| = m}\\
&\hspace{0,4 cm}+ (1-\alpha)\pare{\sum_{m\in \N^\ast}f(m)\Prob^{m_0}\pare{|M^\dagger_{j}| = m} - \sum_{m\in \N^\ast}f(-m)\Prob^{m_0}\pare{|M_j^\dagger| = m}}\\
&= T_j^{0}f(m_0).
\end{align*}

The same kind of arguments may be invoqued for negative $m_0$ and the result of the lemma follows.
\end{proof}

\medskip

Analogously to the above, let us introduce $(W_t)_{t\geq 0}$ a standard Brownian motion constructed on $(\Omega, {\cal F}, {\Prob}^{x})$ starting from $x$ ({\it i.e.} $W_0=x$, ${\Prob}^{x}$-a.s).

Let $b{\cal E}\pare{\R,\C}$ stand for the set of Borel bounded complex valued functions defined on $\R$. Similarly as before, we introduce the following family of operators $\pare{T_t}_{t\geq 0}$ acting from $b{\cal E}\pare{\R,\C}$ to $b{\cal E}\pare{\R,\C}$ and defined by
\begin{align}
T_tf~:~x\mapsto 
\begin{cases}
\alpha \E^{x}f(|W_t|) + (1-\alpha)\E^{x}f(-|W_t|) + (1-\alpha)\pare{\E^{x}f(|W^\dagger_t|) - \E^{x}f(-|W^\dagger_t|)}\text{if }x\geq 0\\
\alpha \E^{|x|}f(|W_t|) + (1-\alpha)\E^{|x|}f(-|W_t|) + \alpha\pare{\E^{|x|}f(-|W^\dagger_t|) - \E^{|x|}f(|W^\dagger_t|)}\;\;\text{if }x< 0.
\end{cases}
\end{align}

\begin{lemma}
\label{lemma:semi-group-psdosbm}
For any $f\in b{\cal E}\pare{\R,\C}$,  
\[
\E^{x}(f(X_t)) = T_tf(x).
\]
\end{lemma}
\begin{proof}
Suppose $x>0$. We use \eqref{eq:walsh3} and check at once that
\begin{align}
\label{eq:TF-SBM}
\E^{x}(f(X_t))&=\int_{-\infty}^{+\infty} f(y){p}(t,x,y)dy\nonumber\\
&=\alpha\int_0^{+\infty} f(y)\hat{g}(t,x,y)dy + (1-\alpha)\int_0^{+\infty} f(y)\check{g}(t,x,y)dy\nonumber\\
&\hspace{0,5 cm} + \pare{1-\alpha}\pare{\int_{-\infty}^0 f(y)\hat{g}(t,x,-y)dy - \int_{-\infty}^{0}f(y)\check{g}(t,x,-y)dy}\nonumber\\
&=\alpha\int_0^{+\infty} f(y)\hat{g}(t,x,y)dy + (1-\alpha)\int_0^{+\infty} f(y)\check{g}(t,x,y)dy\nonumber\\
&\hspace{0,5 cm} + \pare{1-\alpha}\pare{\int_{0}^{+\infty} f(-y)\hat{g}(t,x,y)dy - \int_{0}^{+\infty} f(-y)\check{g}(t,x,y)dy}\nonumber\\
&=T_tf(x).
\end{align}
The same type of arguments may be invoqued for negative $x$ and the result of the lemma follows.
\end{proof}

The main result of this section is the following statement.
\begin{proposition}
\label{prop:conv-pseudo-MA}
The rescaled asymmetric random walk
converges in the sense of Definition~\ref{def_pseudo_weak_cv}
to the pseudo SBM~:
\[
{X}^n:=\pare{n^{-1/2}S_{\lfloor nt\rfloor}}_{t\geq 0}\stackrel{\text{pseudo-w}}{\longrightarrow} \pare{X_t}_{t\geq 0}\hspace{0.4 cm}\text{ as }\hspace{0.4 cm}n\rightarrow +\infty.
\]
\end{proposition}

\begin{proof}

Our main concern is to ensure that all arguments for the convergence in the classical case still hold true in our pseudo probability context. Since we deal with pseudo stochastic processes, the difficulties are two fold:
\begin{itemize}[label = $-$]
\item we are not allowed to use a Skorokhod embedding, a tool that is often used for the convergence of random walks;
\item we are not allowed to perform transformations directly on the path space such as reflection or killing for pseudo processes.
\end{itemize}
Note also that, except in the case $\alpha=1/2$, the skewed random walk and the skewed Brownian motion (classical or pseudo) do not have independent increments.
Following \cite{harrison-shepp} in the classical case where $\alpha\in (0,1)$, the idea is to observe the similarity of structure shared by the family of operators $(T_j^0)$ and $(T_t)$ (whose definition involve only classical processes) and take advantage of known results concerning the convergence of random walks. 

We only give the main arguments of the proof and leave the computational details to the reader.
\medskip

Let us fix $x\geq 0$, $t\geq 0$, $u\in \R$ and set $m_n :=\lfloor \sqrt{n}x \rfloor$. 

As mentionned above, in the whole proof
\begin{equation*}
\hat\psi_{0}(j;m_n,u) := \E^{m_n}\pare{\exp\pare{i u\:|M_{j}|}},\;\;\;\text{resp. }\;\;\check{\psi}_{0}(j;m_n,u) := \E^{m_n}(\exp(i u |M^{\dagger}_{j}|))
\end{equation*}
stands for the characteristic function of the classical standard symmetric random walk on $\Z$ reflected at $0$ (resp. killed at $0$) starting from $\lfloor \sqrt{n}x \rfloor$ and taken at time $j$ and evaluated at $u$.

Similarly
\begin{equation}
\label{eq:characteristic-BM}
\hat{\psi}(t;x,u) := \E^{x}({\rm e}^{iu|W_t|}),\;\;\;\;\;\text{resp.}\;\;\check{\psi}(t;x,u) := \E^{x}({\rm e}^{iu|{W}^\dagger_t|})
\end{equation}
stands for the characteristic function of a standard Brownian motion $(W_s)$ reflected at $0$ on $[0,\infty)$ (resp. killed at $0$) at time $t$ and evaluated at $u$.

Let us denote $h_u\in b{\cal E}\pare{\R,\C}$ the function defined by
\[
h_u~:~z\mapsto \exp(iu\,z).
\]

From the result of Lemma \ref{lemma:semi-group-psdosbm} we infer that
\begin{align}
\label{eq:decomp-TF-SBM}
\E^x\pare{\exp\pare{i u\;X_t}} &= T_th_u(x)\nonumber\\
&=\alpha\hat{\psi}(t;x,u) + (1-\alpha)\hat{\psi}(t;x,-u) + (1-\alpha)\pare{\check{\psi}(t;x,u) - \check{\psi}(t;x,-u)}
\end{align}

Similarly, from  the result of Lemma \ref{lemma:semi-group-srw}, we have
\begin{align}
\label{eq:decomp-TF-RW}
\E^{m_n}\pare{\exp\pare{i u\;n^{-1/2}S_{\lfloor nt\rfloor}}} &= T^0_{\lfloor nt\rfloor}h_{n^{-1/2}u}(m_n)\nonumber\\
&=\alpha\hat{\psi}_{0}(\lfloor nt\rfloor;m_n, n^{-1/2}u) + (1-\alpha)\hat{\psi}_{0}(\lfloor nt\rfloor;m_n,-n^{-1/2}u)\nonumber\\
&\hspace{0,5 cm} +  (1-\alpha)\pare{\check{\psi}_{0}(\lfloor nt\rfloor;m_n,n^{-1/2}u) -\check{\psi}_{0}(\lfloor nt\rfloor;m_n,-n^{-1/2}u)}.
\end{align}

Classical results regarding the convergence of characteristic functions for normalized symmetrized random walks reflected at $0$ (resp. killed at $0$) towards the characteristic functions of the standard Brownian motion reflected at $0$ (resp. killed at $0$) ensure that 
$\hat{\psi}_0\pare{\lfloor nt\rfloor;m_n,n^{-1/2}u}$ 
(resp. $\check{\psi}_0\pare{\lfloor nt\rfloor;m_n,n^{-1/2}u}$) 
converges simply to $\hat{\psi}(t; x, u)$ (resp. to $\check{\psi}(t; x, u)$) 
as $n$ tends to infinity. 

By comparison of \eqref{eq:decomp-TF-RW} with \eqref{eq:decomp-TF-SBM} and by the linearity of the convergence, we get
\[
\E^{m_n}\croc{\exp\pare{i u\,\,n^{-1/2}S_{\lfloor nt\rfloor}}}\xrightarrow[n \rightarrow +\infty]{}\E^{x}\croc{\exp\pare{i u X_t}}.\]
Similar arguments show that this convergence also holds for $x<0$ and $x=0$. Recasting this convergence using our operators writes: for any $x\in \R$,
\begin{equation}
\label{convergence-standard}
T^0_{\lfloor nt\rfloor}h_{n^{-1/2}u}(m_n)\xrightarrow[n \rightarrow +\infty]{}T_th_u(x).
\end{equation}
We leave it to the reader to check using very close arguments that we have also
\begin{equation}
\label{convergence-standard-2}
T^0_{\lfloor nt\rfloor + 1}h_{n^{-1/2}u}(m_n)\xrightarrow[n \rightarrow +\infty]{}T_th_u(x).
\end{equation}
\vspace{0,2 cm}

Let us now turn to the convergence of the characteristic function for joint pseudo distributions.

As above, let $x\geq 0$, $0\leq t_1<t_2$, $u_1,u_2\in \R$ and set $m_n =\lfloor \sqrt{n}x \rfloor$. Applying the Markov property of the pseudo random walk and the result of Lemma \ref{lemma:semi-group-srw} gives
\begin{align}
\label{eq:decomp-TF-RW-2-0}
&\Psi_{0}(\lfloor nt_1\rfloor,\lfloor nt_2\rfloor;m_n,u_1,u_2)\nonumber\\ 
&\hspace{0,5 cm}:=\E^{m_n}\croc{\exp\pare{i u_1\;n^{-1/2}S_{\lfloor nt_1\rfloor} + i u_2\;n^{-1/2}S_{\lfloor nt_2\rfloor}}}\nonumber\\ 
&\hspace{0,5 cm}=\E^{m_n}\croc{\exp\pare{i u_1\;n^{-1/2}S_{\lfloor nt_1\rfloor}}\E^{S_{\lfloor nt_1\rfloor}}\croc{\exp\pare{i u_2\;n^{-1/2}S_{\lfloor nt_2\rfloor - \lfloor nt_1\rfloor}}}}\nonumber\\
&\hspace{0,5 cm}=T^{0}_{\lfloor nt_1\rfloor}\croc{j\mapsto h_{n^{-1/2}u_1}(j)T^{0}_{\lfloor nt_2\rfloor - \lfloor nt_1\rfloor}h_{n^{-1/2}u_2}(j)}(m_n).
\end{align}
\medskip

Note that $(\lfloor nt_2\rfloor - \lfloor nt_1\rfloor) \in \{\lfloor n(t_2 - t_1)\rfloor, \lfloor n(t_2 - t_1)\rfloor + 1\}$. As before, we are in position to use standard convergence results regarding the characteristic functions of normalized classical symmetric random walks.

Using \eqref{convergence-standard} and \eqref{convergence-standard-2}, we prove that $\Psi_{0}(\lfloor nt_1\rfloor,\lfloor nt_2\rfloor;m_n,u_1,u_2)$ converges as $n$ tends to infinity to
\begin{align*}
\Psi(t_1,t_2;x,u_1,u_2) &:= T_{t_1}\croc{y\mapsto h_{u_1}(y)T_{t_2 - t_1}h_{u_2}(y)}(x).
\end{align*}
Note that from the definitions of the operators $(T^0_j)_{j\in \N}$ and $(T_t)_{t\geq 0}$ this convergence results from the properties of the weak convergence of classical symmetric random walks (reflected or killed) towards their corresponding standard Brownian motion (reflected or killed). 


It remains to check that 
\begin{equation}
\label{eq:final}
\Psi(t_1,t_2;x,u_1,u_2) = \E^x\croc{\exp\pare{i (u_1 X_{t_1} + u_2 X_{t_2}}}
\end{equation}
in order to conclude our proof. But, exactly as before this last equality is easily verified by the use of the Markov property for the pseudo-process $(X_t)$ at time $t_1$ (which itself derives directly from the Chapman-Kolmogorov identity). Indeed, by application of Lemma \ref{lemma:semi-group-psdosbm} and the definition of $h_{u}$, we have
\begin{align*}
\Psi(t_1,t_2;x,u_1,u_2) &:= T_{t_1}\croc{y\mapsto h_{u_1}(y)T_{t_2 - t_1}h_{u_2}(y)}(x)\\
&=\E^x\croc{\exp\pare{iu_1\,X_{t_1}}\E^{X_{t_1}}\pare{\exp\pare{iu_2\,X_{t_2 - t_1}}}}\\
&=\E^x\croc{\exp\pare{i(u_1X_{t_1} + u_2X_{t_2}}}
\end{align*}
which gives \eqref{eq:final}.

Hence, 
\begin{align*}
\lim_{n\rightarrow +\infty}\E^{m_n}\croc{\exp\pare{i\pare{u_1\,n^{-1/2}S_{\lfloor n t_1\rfloor} + u_2\,n^{-1/2}S_{\lfloor n t_2\rfloor}}}}=\E^x\croc{\exp\pare{i\pare{u_1\,X_{t_1}+ u_2\,X_{t_2}}}}.
\end{align*}

\medskip
\medskip

Finally, we may extend the previous limit result by induction to prove that
for any $\ell\in \N^{\ast}$, $u_1,\dots, u_\ell\in \R$ and times $0\leq t_1 <\dots < t_\ell$,
\[\lim_{n\rightarrow +\infty}\E^{m_n}\croc{\exp\pare{i\sum_{j=1}^\ell u_j\,n^{-1/2}S_{\lfloor n t_j\rfloor}}}= \E^x\croc{\exp\pare{i\sum_{j=1}^\ell u_jX_{t_j}}}.
\]
The proof is completed.
\end{proof}

\section{The evolution equation on a finite interval}
\label{sec:fund-sol}

In this section we assume $0<a<\infty$.

Again, we first address the case $k > 0$, in which we compute the transition function of the skew Brownian motion killed at the end-points of $I=(-a,a)$, then we treat the general case $k\in\R^*\setminus\{-1\}$.

\subsection{The transition function of the SBM killed at $-a$ or $a$ (case $k>0$)}
\label{ssec:SBMkilled}

In the same manner as for \eqref{eq:FK1}, it can be shown that 
the solution to the parabolic problem~\eqref{eq_evol} on $(-a,a) \times \R$
can be represented as
\begin{equation}
\label{eq:FK2}
u(t,x)=\E^x[u_0(\check{X}_t)]
\end{equation}
where $\check{X}$ is the SBM of parameter $\beta=\frac{1-k}{1+k}$, killed at $-a$ or $a$. 
This process behaves like the SBM $X$ as long as it does not exit from $(-a,a)$. 
When it touches $-a$ or $a$ it is sent at a cemetery point $\partial$. 
By convention, for any function $f$ one has $f(\partial)=0$, 
which ensures that the homogeneous Dirichlet boundary condition in \eqref{eq:CL-evol}
is satisfied. 
\medskip

Let us assume that $x \in (-a,a)$ and let 
$\check{T}_{(-a,a)}=\inf\{t\geq 0:\,\check{X}_t\notin (-a,a)\}$. 
We compute
\begin{equation}
\label{eq:FK3}
\begin{array}{lll}
\E^x[u_0(\check{X}_t)]&=&\E^x[u_0(\check{X}_t);\, \check{T}_{(-a,a)} \leq t]+\E^x[u_0(\check{X}_t);\,\check{T}_{(-a,a)}> t]\\
\\
&=&\E^x[u_0(\partial);\, \check{T}_{(-a,a)} \leq t]+\E^x[u_0(\check{X}_t);\,\check{T}_{(-a,a)}> t]\\
\\
&=&\E^x[u_0(X_t);\,T_{(-a,a)}> t],
\end{array}
\end{equation}
where $X$ is the SBM considered in Section \ref{ssec:pseudo-SBM} and $T_{(-a,a)}=\inf\{t\geq 0:\,X_t\notin (-a,a)\}$.

We first derive the expression of the kernel $\check{p}(t,x,y)$ 
that satisfies $\P^x(X_t\in dy;\,T_{(-a,a)}>t)=\check{p}(t,x,y)dy$, 
i.e. the fundamental solution to \eqref{eq_evol}~:

\begin{proposition}
\label{prop:trans-Xkilled}
Let $\beta\in (-1,1)$. Let $X$ be the solution of \eqref{eq:SBM}, i.e. the SBM of parameter $\beta$.
Let $a>0$.
 
For any $x,y\in(-a,a)$, any $t> 0$, one has $\P^x(X_t\in dy;\,T_{(-a,a)}>t)=\check{p}(t,x,y)dy$ with
\begin{equation}
\label{eq:trans}
\check{p}(t,x,y)=\left\{
\begin{array}{lll}
(1-\beta)p^{(-a,a)}_W(t,x,y)&\text{if}&x\geq 0,\,y<0\\
\\
-\beta p^{(0,a)}_W(t,x,y)+(1+\beta)p^{(-a,a)}_W(t,x,y)&\text{if}&x>0,\,y>0\\
\\
(1+\beta)p^{(-a,a)}_W(t,x,y)&\text{if}&x\leq 0,\,y>0\\
\\
\beta p^{(-a,0)}_W(t,x,y)+(1-\beta)p^{(-a,a)}_W(t,x,y)&\text{if}&x<0,\,y<0.\\
\end{array}
\right.
\end{equation}
\end{proposition}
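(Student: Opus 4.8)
The plan is to obtain the subprobability density $\check p$ by probabilistic path arguments, exploiting the Walsh decomposition of the SBM (Lemma \ref{lem:walsh}) together with a reflection principle for the Brownian motion killed at $\pm a$. I first reduce to starting points $x\ge 0$: the map $X\mapsto -X$ turns the SBM of parameter $\beta$ into the SBM of parameter $-\beta$ and leaves the symmetric interval $(-a,a)$ invariant, so the two cases $x<0$ of \eqref{eq:trans} follow from the two cases $x\ge 0$ after exchanging $\beta\leftrightarrow-\beta$ and $y\leftrightarrow-y$, using $p^{(0,a)}_W(t,-x,-y)=p^{(-a,0)}_W(t,x,y)$. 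Throughout I use that $T_{(-a,a)}=\inf\{s\ge 0:\,|X_s|=a\}$, so the killing event $\{T_{(-a,a)}>t\}=\{\max_{s\le t}|X_s|<a\}$ is measurable with respect to the path of $|X|$, which by Lemma \ref{lem:walsh}(i) is a reflecting Brownian motion.

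Next comes the decomposition. Let $\tau=\inf\{s\ge 0:\,X_s=0\}$ and fix $x\ge 0$. I split $\{X_t\in dy,\,T_{(-a,a)}>t\}$ over $\{\tau>t\}$ and $\{\tau\le t\}$. On $\{\tau>t\}$ the process stays positive, so $X=|X|$ is a free Brownian motion up to $\tau$ confined to $(0,a)$; this event contributes $p^{(0,a)}_W(t,x,y)\,dy$ when $y>0$ and nothing when $y<0$. On $\{\tau\le t\}$ I apply the strong Markov property at $\tau$ together with Lemma \ref{lem:walsh}(ii): conditionally on the whole trajectory of $|X|$, the quantity $\mathrm{sign}(X_t)\,\1_{\tau\le t}$ is, with $\P(\mathrm{sign}(X_t)=+)=\alpha=(1+\beta)/2$, independent of $|X|$. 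Hence the $\{\tau\le t\}$ contribution is $\alpha\,\P^x(|X_t|\in d|y|,\,\max_{s\le t}|X_s|<a,\,\tau\le t)$ for $y>0$, and $(1-\alpha)$ times the same quantity for $y<0$.

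It then remains to evaluate the law of the killed reflecting motion. For $\rho\in(0,a)$, even reflection about $0$ of the reflecting Brownian motion $|X|$ yields $\P^x(|X_t|\in d\rho,\,\max_{s\le t}|X_s|<a)=\big(p^{(-a,a)}_W(t,x,\rho)+p^{(-a,a)}_W(t,x,-\rho)\big)\,d\rho$; subtracting the already–identified $\{\tau>t\}$ part $p^{(0,a)}_W(t,x,\rho)$ isolates the $\{\tau\le t\}$ part. I then simplify with the reflection identity
\[
p^{(-a,a)}_W(t,x,\rho)=p^{(0,a)}_W(t,x,\rho)+p^{(-a,a)}_W(t,x,-\rho),\qquad x,\rho>0,
\]
which follows by decomposing the Brownian paths from $x$ to $\rho$ staying in $(-a,a)$ according to whether they hit $0$, and reflecting at $0$ the ones that do. Assembling the two contributions and inserting $\alpha=(1+\beta)/2$, $1-\alpha=(1-\beta)/2$ reproduces the four lines of \eqref{eq:trans}.

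The main obstacle is the conditional independence claim on $\{\tau\le t\}$: one must justify rigorously that, conditionally on the entire path of $|X|$ (which encodes the killing functional $\{\max_{s\le t}|X_s|<a\}$), the sign of $X_t$ is $+$ with probability $\alpha$. This is where the strong Markov property at $\tau$ — making the post-$\tau$ process an SBM started at $0$, independent of the pre-$\tau$ path — must be combined with Walsh's independence of the sign process and the modulus process under $\P^0$; the fact that the killing event is $|X|$-measurable is precisely what lets it ride through the factorization without disturbing it. I expect this measure-theoretic bookkeeping, rather than the underlying Gaussian computations, to be the delicate point.
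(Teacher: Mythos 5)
Your proof is correct, and its skeleton is the same as the paper's: decompose at the first hitting time $\tau=T_0$ of zero, identify the $\{\tau>t\}$ contribution with a Brownian motion killed at $0$ and $a$, and use Walsh's Lemma~\ref{lem:walsh} to distribute the sign with weights $\alpha$ and $1-\alpha$ on $\{\tau\le t\}$. Where you diverge is in the re-gluing. The paper conditions on $T_0=u$, uses the strong Markov property to see the post-$T_0$ process as an SBM from $0$, applies Walsh plus the symmetry of a Brownian motion from $0$ to get $(1\pm\beta)\,\P^0(\tilde W_{t-u}\in dy;\,\tilde\tau_{(-a,a)}>t-u)$, and then reassembles the convolution against $f_{T_0}^x=f_{\tau_0}^x$ into $(1\pm\beta)\,\P^x(W_t\in dy;\,\tau_0\le t,\,\tau_{(-a,a)}>t)$ via the strong Markov property of $W$, finally expressing this through $p^{(-a,a)}_W$ and $p^{(0,a)}_W$. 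You instead condition on the whole modulus path --- legitimate, since both $\{\max_{s\le t}|X_s|<a\}$ and $\{\tau\le t\}$ are modulus-measurable --- and then compute the killed reflecting law by even unfolding together with the kernel identity $p^{(-a,a)}_W(t,x,\rho)=p^{(0,a)}_W(t,x,\rho)+p^{(-a,a)}_W(t,x,-\rho)$ for $x,\rho>0$, which indeed holds (your path-reflection justification is sound, and it can also be read off the series \eqref{eq:Bro-killed}); assembling with $\alpha=(1+\beta)/2$ reproduces all four lines of \eqref{eq:trans}, as I have checked. Your version trades the paper's convolution bookkeeping for two ingredients you must (and essentially do) supply: first, that $|X|$ is a reflecting Brownian motion under $\P^x$ for $x>0$, not merely under $\P^0$ as stated in Lemma~\ref{lem:walsh} (It\^o--Tanaka, or your own gluing at $\tau$, gives this); second, the freezing argument promoting Walsh's fixed-time sign probability to the random time $t-\tau$, the delicate point you correctly flag, and which is the exact counterpart of the paper's integration against $f_{T_0}^x(u)\,du$. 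Your symmetry reduction $x\mapsto -x$, $\beta\mapsto-\beta$, with $p^{(0,a)}_W(t,-x,-y)=p^{(-a,0)}_W(t,x,y)$, is a genuine economy: the paper merely asserts that the remaining cases are ``treated in a similar manner''.
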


\begin{proof}[Proof of Proposition \ref{prop:trans-Xkilled}]
We use Lemma \ref{lem:walsh} and take advantage of the symmetry of the space interval $(-a,a)$: then it is possible to adapt the arguments in \cite{walsh}, using for example in particular the transition function of  the Brownian motion killed at $-a$ or $a$ instead of the transition function of the Brownian motion. We detail all the steps for the sake of completeness.

We set $T_0=\inf\{t\geq 0:\,X_t=0\}$ and 
remark that $T_{(-a,a)}=\inf\{t\geq 0:\,X_t\notin (-a,a)\}=\inf\{t\geq 0:\,|X_t|=a\}$. 
We denote~
$\tau_0=\inf\{t\geq 0:\,W_t=0\}$ and $\tau_{(-a,a)}=\inf\{t\geq 0:\,|W_t|=a\}$. 
In the computations below, by a slight abuse of notation we may denote by $\P^x$ 
either $\P(\cdot|X_0=x)$ or $\P(\cdot|W_0=x)$. 
This will be clear from the context.

Let $t>0$.
We  first treat the case $x>0$, $y<0$. 
Using the fact that $\P^x(X_t\in dy; T_0> t)=0$, and using the strong Markov property of $X$, 
it follows that
$$
\begin{array}{l}
\P^x(X_t\in dy;\,T_{(-a,a)}>t)\\
\\
=\P^x(X_t\in dy; T_0\leq t;\,T_{(-a,a)}>t)+\P^x(X_t\in dy; T_0> t;\,T_{(-a,a)}>t)\\
\\
=\E^x\big[ \1_{T_0\leq t} \P^x(X_t\in dy;\,T_{(-a,a)}>t|\cF_{T_0})\big]\\
\\
=\int_0^t\P^0(\tilde{X}_{t-u}\in dy;\,\tilde{T}_{(-a,a)}>t-u)f_{T_0}^x(u)du,\\
\end{array}
$$
where $\tilde{X}$ is another SBM of parameter $\beta$, starting from zero under $\P^0$, $\tilde{T}_{(-a,a)}=\inf\{t\geq 0:\,|\tilde{X}_t|=a\}$ and $f_{T_0}^x(u)du$ is the law of 
$T_0$ under $\P^x$.

From Lemma \ref{lem:walsh} we have
$$
\begin{array}{l}
\P^0(\tilde{X}_{t-u}\in dy;\,\tilde{T}_{(-a,a)}>t-u)\\
\\
=\P^0(|\tilde{X}_{t-u}|\in -dy;\,\tilde{T}_{(-a,a)}>t-u; \tilde{X}_{t-u}<0)\\
\\
=\frac{1-\beta}{2}\P^0(|\tilde{W}_{t-u}|\in -dy;\,\tilde{\tau}_{(-a,a)}>t-u)\\
\\
=\frac{1-\beta}{2}\Big( \P^0(\tilde{W}_{t-u}\in -dy;\,\tilde{\tau}_{(-a,a)}>t-u) + \P^0(-\tilde{W}_{t-u}\in -dy;\,\tilde{\tau}_{(-a,a)}>t-u)\Big)\\
\\
=(1-\beta)\P^0(\tilde{W}_{t-u}\in dy;\,\tilde{\tau}_{(-a,a)}>t-u)
\end{array}
$$
where $\tilde{W}$ is a Brownian motion starting from zero under $\P^0$ and $\tilde{\tau}_{(-a,a)}=\inf\{t\geq 0:\,|\tilde{W}_t|=a\}$.

Thus denoting by $f_{\tau_0}^x(u)du$  the law of 
$\tau_0$ under $\P^x$, noticing that $f_{T_0}^x(u)=f_{\tau_0}^x(u)$, 
and using this time the strong Markov property of $W$, we calculate
$$
\begin{array}{l}
\P^x(X_t\in dy;\,T_{(-a,a)}>t)\\
\\
=(1-\beta)\int_0^t\P^0(\tilde{W}_{t-u}\in dy;\,\tilde{\tau}_{(-a,a)}>t-u)f_{\tau_0}^x(u)du,\\
\\
=(1-\beta)\P^x(W_t\in dy; \tau_0\leq t;\,\tau_{(-a,a)}>t)\\
\\
=(1-\beta)\P^x(W_t\in dy; \,\tau_{(-a,a)}>t)\\
\\
=(1-\beta)p^{(-a,a)}_W(t,x,y)dy,
\end{array}
$$
and obtain the first line of \eqref{eq:trans}.

Next, we treat the case $x>0$ and $y>0$. We have
$$
\begin{array}{l}
\P^x(X_t\in dy;\,T_{(-a,a)}>t)\\
\\
=\P^x(X_t\in dy; T_0\leq t;\,T_{(-a,a)}>t)+\P^x(X_t\in dy; T_0> t;\,T_{(-a,a)}>t)\\
\end{array}
$$
As $0<x,y<a$, the term $\P^x(X_t\in dy; T_0> t;\,T_{(-a,a)}>t)$ corresponds to the transition of a Brownian motion killed at $0$ or $a$, or in other words,
$$
\P^x(X_t\in dy; T_0> t;\,T_{(-a,a)}>t)=p^{(0,a)}_W(t,x,y)dy.$$
As for the term $\P^x(X_t\in dy; T_0\leq t;\,T_{(-a,a)}>t)$, 
we use similar computations as in the case $x>0$, $y<0$. 
Since $X_t$ is positive this time, we obtain
$$
\P^x(X_t\in dy; T_0\leq t;\,T_{(-a,a)}>t)=(1+\beta)\P^x(W_t\in dy; \tau_0\leq t;\,\tau_{(-a,a)}>t).$$
Notice also that
$$
\begin{array}{l}
\P^x(W_t\in dy; \tau_0\leq t;\,\tau_{(-a,a)}>t)\\
\\
=\P^x(W_t\in dy; \,\tau_{(-a,a)}>t)-\P^x(W_t\in dy; \tau_0> t;\,\tau_{(-a,a)}>t)\\
\\
=p^{(-a,a)}_W(t,x,y)dy-p^{(0,a)}_W(t,x,y)dy.\\
\end{array}
$$
Putting all the pieces together yields
$$
\begin{array}{lll}
\P^x(X_t\in dy;\,T_{(-a,a)}>t)&=&\Big[  p^{(0,a)}_W(t,x,y) + (1+\beta) \{p^{(-a,a)}_W(t,x,y)-p^{(0,a)}_W(t,x,y)\}\Big]dy\\
\\
&=&\Big[-\beta p^{(0,a)}_W(t,x,y)+(1+\beta)p^{(-a,a)}_W(t,x,y)\Big]dy,\\
\end{array}
$$
and thus, the second line of \eqref{eq:trans}. 
The remaining cases can be treated in a similar manner.
\end{proof}

\begin{remark}
\label{rem:coherent-walsh}
Note the consistence of \eqref{eq:walsh2} with \eqref{eq:trans}, as $a\to\infty$ 
in \eqref{eq:trans}.
\end{remark}

\begin{remark}
\label{rem:int-1}
Note that as
$$
\int_{-a}^a\check{p}(t,x,y)dy=\P^x(\,T_{(-a,a)}>t),
$$
$\check{p}(t,x,\cdot)$ does not necessarily integrate to 1. 
The definition of a genuine family of transition probability measures
from $\check{p}(t,x,y)dy$ 
would require combining this density measure with a Dirac measure that charges $\partial$ 
with probability $\P^x(\,T_{(-a,a)}\leq t)$, see \cite{RY}, p84.

Still by a slight abuse of language we call this kernel 
the transition function of $\check{X}$. 

Note that in the hereafter examined case of a negative coefficient, the integrals of $\check{p}(t,x,\cdot)$ do not even necessarily correspond to true probabilities.
\end{remark}

\subsection{The fundamental solution to~\eqref{eq_evol} in the general case $k\in\R^* \setminus \{-1\}$} 

When $k > 0$, the transition function yields a fundamental solution 
to the evolution equation. When $k < 0, k \neq -1$, as in the case when $I = \R$, we check that 
its expression also yields a fundamental solution.

\begin{lemma}
\label{lem:verif}
Let $k\in\R^* \setminus \{-1\}$ and define $\beta=\frac{1-k}{1+k}$. 
The kernel $\check{p}(t,x,y)$ defined by \eqref{eq:trans} is the fundamental solution to \eqref{eq:evol-nondiv}-\eqref{eq:CL-evol}, or to \eqref{eq_evol}.
Equivalently it satisfies for any $y\in(-a,a)$,
\begin{equation}
\label{eq:fund-lap}
\partial_t\check{p}(t,x,y)=\frac12\partial^2_{xx}\check{p}(t,x,y),\quad \forall x\in(-a,0)\cup(0,a),
\end{equation}
 and
\begin{equation}
\label{eq:fund-trans-trace}
\check{p}(t,0-,y)=\check{p}(t,0+,y),
\end{equation}
\begin{equation}
\label{eq:fund-trans}
k \partial_{x}\check{p}(t,0-,y)=\partial_{x}\check{p}(t,0+,y),
\end{equation}
and
\begin{equation}
\label{eq:fund-lim}
\lim_{x\to \pm a}\check{p}(t,x,y)=0.
\end{equation}
\end{lemma}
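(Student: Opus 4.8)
The plan is to verify the four properties \eqref{eq:fund-lap}--\eqref{eq:fund-lim} directly from the explicit image-series expression \eqref{eq:Bro-killed} of the building blocks $p^{(b,c)}_W$, exactly as was done on the whole line in the case $I=\R$. The first observation is that on each of the two regions $x\in(-a,0)$ and $x\in(0,a)$ the kernel $\check p$ is a linear combination, with \emph{$\beta$-independent building blocks}, of the three killed Brownian kernels $p^{(-a,a)}_W$, $p^{(0,a)}_W$, $p^{(-a,0)}_W$. Each of these is by \eqref{eq:Bro-killed} a locally uniformly convergent superposition of Gaussians $g$, so \eqref{eq:g-chal} together with term-by-term differentiation (justified by the Gaussian tails) gives $\partial_t p^{(b,c)}_W=\tfrac12\partial^2_{xx}p^{(b,c)}_W$ on $(b,c)$; by linearity this yields \eqref{eq:fund-lap} for \emph{every} $\beta$, i.e. for every $k\neq-1$. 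Likewise the boundary condition \eqref{eq:fund-lim} is immediate: as $x\to a^-$ the relevant pieces involve only $p^{(-a,a)}_W$ and $p^{(0,a)}_W$, both of which vanish there by \eqref{eq:Bro-killed-lim}, and symmetrically as $x\to-a^+$ only $p^{(-a,a)}_W$ and $p^{(-a,0)}_W$ occur, again vanishing.

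For the trace condition \eqref{eq:fund-trans-trace} I would take one-sided limits at $x=0$. The key point is that the two ``inner'' kernels are killed exactly at $0$, so $p^{(0,a)}_W(t,0,y)=p^{(-a,0)}_W(t,0,y)=0$ by \eqref{eq:Bro-killed-lim}. Hence for $y<0$ both $\check p(t,0^+,y)$ and $\check p(t,0^-,y)$ collapse to $(1-\beta)\,p^{(-a,a)}_W(t,0,y)$, while for $y>0$ both collapse to $(1+\beta)\,p^{(-a,a)}_W(t,0,y)$; continuity follows with no constraint on $\beta$.

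The real work is the flux condition \eqref{eq:fund-trans}, which is the main obstacle. Differentiating \eqref{eq:trans} in $x$ and letting $x\to0^\pm$, then inserting the coefficients $1\pm\beta$ with $\beta=\frac{1-k}{1+k}$ and simplifying (the case $k=1$ being trivial), the requirement $k\,\partial_x\check p(t,0^-,y)=\partial_x\check p(t,0^+,y)$ reduces to the two purely kernel-theoretic identities
\[
\partial_x p^{(0,a)}_W(t,0^+,y)=2\,\partial_x p^{(-a,a)}_W(t,0,y)\ (y>0),\qquad
\partial_x p^{(-a,0)}_W(t,0^-,y)=2\,\partial_x p^{(-a,a)}_W(t,0,y)\ (y<0),
\]
which carry \emph{no} dependence on $k$. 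To prove them I would differentiate the image series termwise, using \eqref{eq:g-der1} so that $\partial_x g(t,x,z)\big|_{x=0}=h(z):=\tfrac{z}{t}g(t,0,z)$ is an \emph{odd} function of $z$. For $p^{(-a,a)}_W$ (period $4a$) the reflection terms recombine, via $h(-z)=-h(z)$ and a reindexing of $n$, into $\partial_x p^{(-a,a)}_W(t,0,y)=\sum_{m\in\Z}h(y+2ma)$, the even and odd values of $m$ coming respectively from the direct and the reflected images; for the narrower kernels $p^{(0,a)}_W$ and $p^{(-a,0)}_W$ (period $2a$) the same manipulation produces each term $h(y+2ma)$ \emph{twice}, whence the factor $2$. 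This reindexing is the crux of the argument and the only genuinely computational step; it then forces \eqref{eq:fund-trans} for every $k\in\R^*\setminus\{-1\}$.

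Finally, I would record a computation-free alternative for the flux condition that may be worth including. For $k>0$ the function $\check p$ is, by Proposition~\ref{prop:trans-Xkilled} and the Feynman--Kac representation \eqref{eq:FK2}--\eqref{eq:FK3}, the fundamental solution of the genuinely parabolic problem, so \eqref{eq:fund-trans} holds for all $k>0$. Since the kernel gradients above do not depend on $k$, the flux identity is equivalent (for $k\neq0,1$) to the $k$-free relations $\partial_x p^{(0,a)}_W=2\,\partial_x p^{(-a,a)}_W$ and $\partial_x p^{(-a,0)}_W=2\,\partial_x p^{(-a,a)}_W$; being forced by any single admissible $k>0$ with $k\neq1$, these relations hold unconditionally, and substituting back gives \eqref{eq:fund-trans} also when $k<0$. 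Either way, assembling \eqref{eq:fund-lap}, \eqref{eq:fund-trans-trace}, \eqref{eq:fund-trans} and \eqref{eq:fund-lim} shows that $\check p$ is a fundamental solution.
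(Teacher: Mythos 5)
Your proof is correct, and it reduces the flux condition to exactly the same $k$-free identities as the paper---\eqref{eq:depend-pas-beta} for $y>0$ and its mirror image for $y<0$---but you and the paper then travel in opposite directions. The paper's official proof first treats $k>0$, where $\check p$ is the transition function of the killed SBM by Proposition~\ref{prop:trans-Xkilled} together with \eqref{eq:FK2}--\eqref{eq:FK3}, so that all four conditions \eqref{eq:fund-lap}--\eqref{eq:fund-lim} hold; since \eqref{eq:depend-pas-beta} carries no dependence on $\beta$, it is forced by this single parabolic case and then propagates to every $k\in\R^*\setminus\{-1\}$ (the same inheritance is used for the heat equation \eqref{eq:chal-paa} satisfied by the building blocks). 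This is precisely what you record as your ``computation-free alternative.'' Your primary route, by contrast, verifies everything directly on the image series \eqref{eq:Bro-killed}: termwise heat equation for $p^{(b,c)}_W$, and the odd-function reindexing with $h(z)=\frac{z}{t}\,g(t,0,z)$ showing that the period-$2a$ kernels $p^{(0,a)}_W$, $p^{(-a,0)}_W$ produce each image $h(y+2ma)$ twice while $p^{(-a,a)}_W$ produces it once, whence the factor $2$. That is exactly the computation the paper relegates to Remark~\ref{rem:form} as a ``formal'' check; you upgrade it by observing that the Gaussian decay justifies the termwise differentiation, which makes the lemma self-contained and independent of the probabilistic material for $k>0$---a genuine, if modest, gain in generality of exposition. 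The one step the paper includes that you pass over quickly is the equivalence, established at the start of its proof, between satisfying \eqref{eq:fund-lap}--\eqref{eq:fund-lim} and being a fundamental solution of \eqref{eq:evol-nondiv}-\eqref{eq:CL-evol} (differentiation under the integral against $u_0\in C^\infty_c(-a,a)$, in both directions); since the lemma asserts this equivalence, your final sentence ``assembling \dots shows that $\check p$ is a fundamental solution'' deserves that one-line justification to be complete.
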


\begin{proof}
If a kernel $q(t,x,y)$ is a fundamental solution to \eqref{eq:evol-nondiv}-\eqref{eq:CL-evol},
then by definition, the function
\begin{equation}
\label{eq:eq-fund-bis}
(t,x)\mapsto u(t,x) = \int_{-a}^au_0(y)q(t,x,y)dy
\end{equation}
solves  \eqref{eq:evol-nondiv}-\eqref{eq:CL-evol} for any $u_0\in C^\infty_c(-a,a)$. As 
\[ \begin{array}{rcllcl}
\partial_tu(t,x) &=& \ds\int_{-a}^au_0(y)\partial_tq(t,x,y)dy,
&\quad
 \partial_xu(t,x) &=& \ds\int_{-a}^au_0(y)\partial_xq(t,x,y)dy,
 \\
\textrm{and}\quad
\partial^2_{xx}u(t,x) &=&\ds\int_{-a}^au_0(y)\partial^2_{xx}q(t,x,y)dy,
\end{array}
\]
letting $u_0$ vary in  $C^\infty_c(-a,a)$, it is easy to see that
$q(t,x,y)$ satisfies \eqref{eq:fund-lap}-\eqref{eq:fund-lim}. Conversely,
if \eqref{eq:fund-lap}-\eqref{eq:fund-lim} hold for $q(t,x,y)$, one may check that the function defined by \eqref{eq:eq-fund-bis} solves \eqref{eq:evol-nondiv}-\eqref{eq:CL-evol}. Thus the equivalence.
\medskip

Consider first $k>0$ and $\beta=\frac{1-k}{1+k}$. From \eqref{eq:FK2}, \eqref{eq:FK3} and Proposition \ref{prop:trans-Xkilled}, it is clear that the kernel
$\check{p}(t,x,y)$ defined by \eqref{eq:trans} is the fundamental solution to \eqref{eq:evol-nondiv}-\eqref{eq:CL-evol}, and therefore
satisfies  \eqref{eq:fund-lap}-\eqref{eq:fund-lim}.

Considering \eqref{eq:fund-lap} and for example the first line of \eqref{eq:trans} it is clear that for $x<0$ and $y>0$ one has 
\begin{equation}
\label{eq:chal-paa}
\partial_tp^{(-a,a)}_W(t,x,y)=\frac12\partial^2_{xx}p^{(-a,a)}_W(t,x,y)
\end{equation}
(of course this fact can formally be seen from  \eqref{eq:Bro-killed}; see also Remark~\ref{rem:form} below).
\medskip

Consider now $k\in\R_-^*\setminus\{-1\}$, set $\beta=\frac{1-k}{1+k}$, and
define the kernel $\check{p}(t,x,y)$ by~\eqref{eq:trans}.
From the first line of \eqref{eq:trans}, and \eqref{eq:chal-paa} it is clear that \eqref{eq:fund-lap} holds for $x<0$ and $y>0$. The other cases can be treated similarly, so that 
$\check{p}(t,x,y)$ solves \eqref{eq:fund-lap} for any $y\in(-a,a)$, $x\in(-a,0)\cup(0,a)$.

Concerning the transmission conditions, we first note that when $y > 0$, 
Eq. \eqref{eq:fund-trans} can be rewritten
\begin{equation}
\label{eq:cond-trans-beta}
(1+\beta)\partial_{x}\check{p}(t,0+,y)=(1-\beta)\partial_{x}\check{p}(t,0-,y)
\end{equation}
or equivalently, using the definition of $\check{p}(t,x,y)$ in~\eqref{eq:trans}
(with $y < 0$)
\begin{equation*}
(1-\beta^2)\partial_xp_W^{(-a,a)}(t,0-,y)=
(-\beta-\beta^2)\partial_xp_W^{(0,a)}(t,0+,y)+
(1+2\beta+\beta^2)\partial_xp_W^{(-a,a)}(t,0+,y),
\end{equation*}
which in turn, due to the continuity of $\partial_xp_W^{(-a,a)}(t,0,y)$ at $x=0$, 
is equivalent to
\begin{equation}
\label{eq:depend-pas-beta}
\partial_xp_W^{(0,a)}(t,0+,y)=2\partial_xp_W^{(-a,a)}(t,0,y).
\end{equation}
Note that condition  \eqref{eq:depend-pas-beta} only bears on the properties of $p_W^{(-a,a)}$ and $p_W^{(0,a)}$, and is equivalent to
\eqref{eq:fund-trans} whatever
 the sign of $k$, provided~$k \neq -1$. Thus we know from the case $k > 0$ that \eqref{eq:depend-pas-beta} holds, and thus \eqref{eq:fund-trans} must hold for all $k \neq -1$. 
The same argument applies to $y<0$.
Equation \eqref{eq:fund-trans-trace} is a consequence of~\eqref{eq:Bro-killed-lim} 
and of the continuity at $x=0$ of the kernel~$p_W^{(-a,a)}(t,x,y)$. 

Finally, the Dirichlet condition \eqref{eq:fund-lim} follows from \eqref{eq:Bro-killed-lim} 
and the form of $\check{p}(t,x,y)$.
We conclude that the latter is indeed
the fundamental solution of \eqref{eq:evol-nondiv}-\eqref{eq:CL-evol} when $k < 0, k \neq -1$.
\end{proof}


\begin{remark}
\label{rem:form}
In fact it is possible to start from \eqref{eq:trans}, and to use \eqref{eq:Bro-killed}  in order to formally check~\eqref{eq:fund-lap}-\eqref{eq:fund-trans}, for any $k\in\R^*\setminus\{-1\}$. We especially want to explain how one can derive the transmission condition in \eqref{eq:fund-trans}.

As we have already seen in  the proof of Lemma \ref{lem:verif}, it is enough to check \eqref{eq:depend-pas-beta}. On one hand one has
$$
\begin{array}{l}
\partial_xp_W^{(0,a)}(t,0+,y)\\
\\
=\ds\sum_{n=-\infty}^\infty\big\{  \frac{y-2na}{t}g(t,0,y-2na)-\frac{-y-2na}{t}g(t,0,-y-2na)  \big\} \\
\\
=\ds \sum_{n=-\infty}^\infty\big\{  \frac{y-2na}{t}g(t,0,y-2na)+\frac{y+2na}{t}g(t,0,y+2na)  \big\} \\
\\
=\ds 2 \sum_{n=-\infty}^\infty \frac{y+2na}{t}g(t,0,y+2na)\\
\end{array}
$$
On the other hand one has
$$
\begin{array}{l}
\partial_xp_W^{(-a,a)}(t,0,y)\\
\\
=\ds\sum_{n=-\infty}^\infty\big\{  \frac{y-4na}{t}g(t,0,y-4na)-\frac{2a-y-4na}{t}g(t,0,2a-y-4na)  \big\} \\
\\
=\ds \sum_{n=-\infty}^\infty\big\{  \frac{y-4na}{t}g(t,0,y-4na)+\frac{y+4na-2a}{t}g(t,0,y+4na-2a)  \big\} \\
\\
=\ds \sum_{n=-\infty}^\infty\big\{  \frac{y-2(2n)a}{t}g(t,0,y-2(2n)a)+\frac{y+2(2n-1)a}{t}g(t,0,y+2(2n-1)a)  \big\} \\
\\
=\ds  \sum_{n=-\infty}^\infty \frac{y+2na}{t}g(t,0,y+2na).\\
\end{array}
$$
Therefore \eqref{eq:depend-pas-beta}. 
\end{remark}

Note that in the case $k\in \R_-^*\setminus\{-1\}$ the kernel $\check{p}(t,x,y)$ defined by \eqref{eq:trans} is not positive: indeed in that case $\beta>1$ and it suffices to examine the first line of \eqref{eq:trans}.

So even if we complement the density measure $\check{p}(t,x,y)dy$ with Dirac measures in order to obtain a family of transition pseudo probability measures (see our Remark \ref{rem:int-1}), the latter will not define a Markov process.
One could however introduce the pseudo-process associated to this family, in the spirit of 
Section~\ref{ssec:pseudo-SBM}.

\begin{remark} \label{rem:spectral_form_kernel}
We end up this section by noticing that as expected, the form of the kernel~\eqref{eq:trans} coincides with~\eqref{Green_spect}.
Indeed, the transition density of the Brownian motion on an interval $(a,b)$,
killed at $a$ or $b$ 
has the following spectral representation 
(see Appendix~1 in~\cite{borodin})
\begin{eqnarray*}
p^{(a,b)}_W(t,x,y) &=&
\ds\frac{2}{b-a}
\sum_{n \geq 1} 
\textrm{exp}\Big(-\ds\frac{n^2 \pi^2}{2(b-a)^2} t\Big)
\sin\Big(\ds\frac{n\pi}{b-a}(x-a)\Big) \sin\Big(\ds\frac{n\pi}{b-a}(y-a)\Big).
\end{eqnarray*}
Thus one has
\begin{eqnarray*}
p^{(-a,a)}_W(t,x,y)
&=&
\ds\frac{1}{a}
\sum_{n \geq 1} 
\textrm{exp}\Big( -\ds\frac{n^2 \pi^2}{8a^2} t\Big)
\sin\Big(\ds\frac{n\pi}{2a}(x-a)\Big) \sin\Big(\ds\frac{n\pi}{2a}(y-a)\Big).
\end{eqnarray*}
Regrouping the terms with odd and even indices, the above expression
rewrites, when $x\geq 0$ and $y<0$
\begin{eqnarray*}
p^{(-a,a)}_W(t,x,y)
&=&
+ \ds\frac{1}{a}
\sum_{q \geq 1} 
\textrm{exp}\Big( - \ds\frac{q^2 \pi^2}{2 a^2} t \Big)
\sin\Big(\ds\frac{q\pi}{a}(x-a)\Big) \sin\Big(\ds\frac{q\pi}{a}(y-a)\Big)
\\
&& 
+ \ds\frac{1}{a}
\sum_{q \geq 1} 
\textrm{exp}\Big( - \ds\frac{(2q-1)^2 \pi^2}{8a^2} t \Big)
\cos\Big(\ds\frac{(2q-1)\pi}{2a}(x-a)\Big) \cos\Big(\ds\frac{(2q-1)\pi}{2a}(y-a)\Big)
\\
&=&
\ds\frac{1}{a}
\sum_{q\geq 1} 
\Big[ \ds\frac{1}{k}e^{-\mu_q^2 t/2} \,g_q(x) g_q(y)
\;+\; e^{-\lambda_q^2 t/2} \, f_q(x) f_q(y) \Big]
\end{eqnarray*}
Multiplying by $1-\beta= \frac{2k}{k+1}$ to retrieve the first line of \eqref{eq:trans}, one easily recovers 
the expression~\eqref{Green_spect} when~$x \geq 0$ and $y < 0$. 
A similar calculation applies for the other possible choices of $x$ and $y$.
\end{remark}

\section{Various numerical schemes for the PDE \eqref{eq_evol}}
\label{sec:num}

In this section we want to construct several numerical schemes for the approximation of the solution $u$ of \eqref{eq_evol}, inspired by the theoretical results of the previous sections.

In Section \ref{ssec:num-spec} we construct a scheme $\bar{u}^N_{spec}$ inspired by the spectral representation of the semigroup of Section \ref{sec:spectral}.

In Section \ref{ssec:num-RW} we explain how we can infer a finite difference type scheme $\bar{u}^n_{RW}$ from the scaled pseudo asymmetric random walk $\widehat{X}^n$ of Section \ref{sssec:pseudo-RW}.

In Section \ref{ssec:num-kill-bro} we construct a scheme $\bar{u}^{h,N}_{fund}$, which is inspired by the fact that the fundamental solution $\check{p}(t,x,y)$ computed in Section \ref{ssec:SBMkilled} involves transition functions of killed Brownian motions, which can be seen as the fundamental solutions of simple heat equations, with homogeneous Dirichlet boundary conditions.

\vspace{0.2cm}
An initial condition $u_0$ is given (in Section \ref{ssec:num-spec} it is of class $L^2(I)$, in Sections  \ref{ssec:num-RW}  and \ref{ssec:num-kill-bro} we can imagine it is continuous and bounded).

We have $k\in\R^* \setminus \{-1\}$ and set $\beta=\frac{1-k}{1+k}$ and $\alpha=\frac{1+\beta}{2}$.

In the case $k\in (0,\infty)$  several numerical schemes are available (one can for example first perform a finite element discretization w.r.t. the space variable, and then a Crank-Nicholson scheme, e.g. \cite{raviart}), including probabilistic ones
(e.g. \cite{etore05a}, \cite{martinez12a}, \cite{etoremartinez1}). Thus, for this well explored case the schemes presented hereafter may provide additional methods (up to our knowledge they have never been proposed in this form).

Their main interest is that they allow to handle the case $k\in\R_-^* \setminus \{-1\}$, for which in particular classical probabilistic methods (e.g. \cite{etore05a}, \cite{martinez12a}, \cite{etoremartinez1}) cannot be applied. In the case $k>0$ the latter deeply rely on stochastic simulations of the trajectories of the SBM of parameter~$\beta\in(-1,1)$. When~$k\in\R_-^* \setminus \{-1\}$ one can define a pseudo-SBM, but it proves difficult to define associated trajectories. 
Note also that the techniques for proving the convergence of deterministic schemes do not apply in the case $k \in \R_-^* \setminus \{-1\}$.

\subsection{Scheme inspired by the spectral representation of the semi-group}
\label{ssec:num-spec}

We assume that $u_0\in L^2(I)$ and fix a truncation order $N\in\N^*$.  Then $\bar{u}^N_{spec}$ is defined by 
$$
\bar{u}^N_{spec}(t,x)= \sum_{n = 1} ^N
a_n e^{-\lambda_n^2t/2} f_n(x) + b_n e^{- \mu_n^2 t/2} g_n(x)
$$
where the  $\lambda_n$'s, $\mu_n$'s, $g_n$'s and  $f_n$'s are as in Section \ref{sec:spectral}, and the $a_n$'s and $b_n$'s are defined as in \eqref{eq:def-coeff-ab}. That is to say $\bar{u}^N_{spec}(t,x)$ is obtained by keeping the first $N$ terms in the spectral representation \eqref{eq_sol} of $u(t,x)$.

\vspace{0.2cm}
Of course in order to use this method we have to be able to compute the  $a_n$'s and $b_n$'s. In Section~\ref{sec:experiments} we consider two examples where these computations can be done explicitly. If this is not the case we can resort to numerical integration. We sum up hereafter the proposed algorithm (Algorithm 1).

\vspace{0.3cm}

\fbox{\parbox{\textwidth}{
{\bf ALGORITHM 1:} Computation of $\bar{u}_{spec}^N(t,x)$.

\vspace{0.2cm}
{\bf Parameters of the method:} A time $0\leq t<\infty$ at which we want to compute the approached solution.
 
 A truncation order $N\in\N^*$.
 
Remember that  the  $\lambda_n$'s, $\mu_n$'s, $g_n$'s and  $f_n$'s are as in Section \ref{sec:spectral}.

\vspace{0.3cm}
{\bf Algorithm:} 1) Compute
\begin{eqnarray*}
a_n \;=\; \ds\frac{ \ds\int_I A(x) u_0(x) f_n(x) \,dx}
{\ds\int_I A(x) |f_n(x)|^2 \,dx},
&\quad&
b_n \;=\; \ds\frac{ \ds\int_I A(x) u_0(x) g_n(x) \,dx}
{\ds\int_I A(x) |g_n(x)|^2 \,dx}
\end{eqnarray*}
(either exactly or by numerical integration).

\vspace{0.2cm}
   2) Return 
   $$
\bar{u}^N_{spec}(t,x)= \sum_{n = 1} ^N
a_n e^{-\lambda_n^2t/2} f_n(x) + b_n e^{- \mu_n^2 t/2} g_n(x).
$$
      }}

\vspace{0.3cm}
As the series in \eqref{eq_sol} is normally convergent we get immediately the following convergence result.

\begin{proposition}
Let us consider $u$ the solution of \eqref{eq_evol}. Let $t\geq 0$ and let us consider for any $N\in\N^*$ the function $\bar{u}^N_{spec}(t,\cdot)$ defined by Algorithm 1. We have
$$
||u(t,\cdot)-\bar{u}^N_{spec}(t,\cdot)||_{L^2(I)}\xrightarrow[N\to\infty]{}0.
$$
\end{proposition}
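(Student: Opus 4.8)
The plan is to notice that, by the very construction in Section~\ref{sec:spectral}, the function $u(t,\cdot)=P_tu_0$ is exactly the full series \eqref{eq_sol} evaluated at the given $t$, so that
\[
u(t,\cdot)-\bar{u}^N_{spec}(t,\cdot)=\sum_{n>N}\bigl(a_n e^{-\lambda_n^2 t/2} f_n+b_n e^{-\mu_n^2 t/2} g_n\bigr)
\]
is nothing but the tail of that series. The whole statement thus reduces to showing that this tail tends to $0$ in $L^2(I)$ as $N\to\infty$. I would split the argument according to whether $t>0$ or $t=0$, since only the former is covered by normal convergence.

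First I would record two elementary uniform bounds. Since $|f_n(x)|\leq 1$ and $|g_n(x)|\leq\max(1,|k|)$ on $I$, one gets $\|f_n\|_{L^2(I)}\leq\sqrt{2a}$ and $\|g_n\|_{L^2(I)}\leq\max(1,|k|)\sqrt{2a}$ for every $n$. For the coefficients, the Cauchy--Schwarz inequality gives $\bigl|\int_I A u_0 f_n\bigr|\leq\|Au_0\|_{L^2}\|f_n\|_{L^2}$, a quantity bounded independently of $n$; dividing by the nonzero constant $\int_I A|f_n|^2=a(k+1)/2$ from \eqref{eq_coeff} (this is exactly where the hypothesis $k\neq-1$ enters) yields $|a_n|\leq C$ uniformly in $n$, and likewise $|b_n|\leq C$ via $\int_I A|g_n|^2=ak(k+1)/2$.

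For $t>0$ I would then exploit the exponential factors. As $\lambda_n=(2n-1)\pi/(2a)$ and $\mu_n=n\pi/a$ both grow linearly in $n$, the tail above is dominated termwise by $C\sum_{n>N}\bigl(e^{-\lambda_n^2 t/2}+\max(1,|k|)\,e^{-\mu_n^2 t/2}\bigr)$, which is the remainder of a convergent series with Gaussian-type decay in $n$. Hence \eqref{eq_sol} is normally convergent for each fixed $t>0$, the tail tends to $0$ even in the uniform norm, and a fortiori in $L^2(I)$ on the bounded interval. At $t=0$ this estimate fails, since the series reduces to the basis expansion \eqref{eq_u0}, which need not be normally convergent; there I would instead invoke Proposition~\ref{prop_basis} directly, by which $(f_n,g_n)_{n\geq1}$ is a Hilbert basis of $L^2(I)$, so that the partial sums of \eqref{eq_u0} converge to $u_0=u(0,\cdot)$ in $L^2(I)$ with the coefficients \eqref{eq:def-coeff-ab}.

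The only point deserving real care, and the one I would watch most closely, is the case $k<0$: there the weighted form $\int_I A\,\cdot\,\cdot$ is indefinite, so the uniform control of $a_n,b_n$ must be read off the explicit constants \eqref{eq_coeff} rather than from any orthonormality in $L^2$, and the $L^2$ convergence at $t=0$ must rest on Proposition~\ref{prop_basis} rather than on a Parseval identity. Once these ingredients are assembled the two cases combine to give $\|u(t,\cdot)-\bar{u}^N_{spec}(t,\cdot)\|_{L^2(I)}\to0$ for every $t\geq0$, which is the claim.
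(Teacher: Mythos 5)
Your proof is correct and follows essentially the same route as the paper, whose entire argument is the one-line observation that the series \eqref{eq_sol} is normally convergent: your uniform bounds $\|f_n\|_\infty\leq 1$, $\|g_n\|_\infty\leq\max(1,|k|)$ and $|a_n|,|b_n|\leq C$ (read off the explicit constants \eqref{eq_coeff}, which is precisely where $k\neq-1$ enters) are exactly the details behind that claim for $t>0$. Your separate treatment of $t=0$ via Proposition~\ref{prop_basis} is in fact more careful than the paper itself, since normal convergence genuinely fails at $t=0$ for general $u_0\in L^2(I)$ and the paper's appeal to it silently covers only $t>0$.
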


\subsection{Scheme inspired by the scaled pseudo asymmetric random walk}
\label{ssec:num-RW}

Let $u_0:\R\to\R$ be continuous and bounded and consider first \eqref{eq_cauchy}. Let us fix $n\in\N^*$ a discretization order.

From Proposition \ref{prop:conv-pseudo-MA}, $u(T,x)=\E^x[u_0(X_T)]$ is approached by $\E^x[u_0(\widehat{X}^n_T)]=\E^x[u_0(n^{-1/2}S_{\lfloor nT\rfloor})]$. We recall that the expectation symbols have to be understood as pseudo expectations, and that $X$ is the pseudo SBM of Section \ref{ssec:pseudo-SBM}.

Let $T>0$ and let us assume that $nT=N$, an integer. Denoting $u_0^n$ the function defined by $u_0^n(z)=u_0(n^{-1/2}\, z)$, $z\in\Z$, one has
\begin{equation*}
\E^x[u_0(\widehat{X}^n_T)]\approx \E^{\lfloor \sqrt{n}x\rfloor}[u_0^n(S_N)]
\end{equation*}
where the expectation in the right hand side is computed under $\P^{\lfloor \sqrt{n}x\rfloor}$ s.t. $\P^{\lfloor \sqrt{n}x\rfloor}(S_0=\lfloor \sqrt{n}x\rfloor)=1$.

Let us denote $v_N=u_0^n$. Using the (pseudo) Markov property of the (pseudo) random walk $S$ we get
\begin{equation}
\label{eq:conditioning1}
\E^{\lfloor \sqrt{n}x\rfloor}[u_0^n(S_N)]=\E^{\lfloor \sqrt{n}x\rfloor}\big[ \E[\,v_N(S_N)\,|\,S_{0},\ldots,S_{N-1}\,] \,\big]=\E^{\lfloor \sqrt{n}x\rfloor}\big[ \,v_{N-1}(S_{N-1}) \,\big]
\end{equation}
where we have denoted $v_{N-1}(z)=\E[\,v_N(S_N)\,|\,S_{N-1}=z\,]$, $z\in\Z$. In fact, defining more generally,
$$
v_{m-1}(z)=\E[\,v_m(S_m)\,|\,S_{m-1}=z\,], \quad z\in\Z,\quad 1\leq m\leq N,
$$
and proceeding to further conditionings in \eqref{eq:conditioning1} we get
$$
\E^{\lfloor \sqrt{n}x\rfloor}[u_0^n(S_N)]=\E^{\lfloor \sqrt{n}x\rfloor}\big[ \,v_{N-2}(S_{N-2}) \,\big]=\ldots=\E^{\lfloor \sqrt{n}x\rfloor}\big[ \,v_{0}(S_{0}) \,\big]=v_0(\lfloor \sqrt{n}x\rfloor).
$$
Note that from the (possibly pseudo) transition probabilities of the random walk $S$ we have for any $1\leq m\leq N$,
$$
v_{m-1}(z)=\frac 1 2 [v_m(z+1)+v_m(z-1)]\1_{z\neq 0}+ [\alpha v_m(z+1)+(1-\alpha)v_m(z-1)]\1_{z= 0}, \quad z\in\Z.
$$
To sum up, one may approach $\E^x[u_0(\widehat{X}^n_T)]$ by computing $v_0(\lfloor \sqrt{n}x\rfloor)$  through the dynamical programming procedure
\begin{equation}
\label{eq:algo-rec}
\left\{
\begin{array}{llll}
v_N(z)&=&u_0^n(z),&\forall z\in\Z\\
\\
v_{m-1}(z)&=&\dfrac 1 2 [v_m(z+1)+v_m(z-1)]\1_{z\neq 0}&\forall z\in\Z, \,\forall 1\leq m\leq N.\\
\\
&&\quad\quad+ [\alpha v_m(z+1)+(1-\alpha)v_m(z-1)]\1_{z= 0},&\\
\end{array}
\right.
\end{equation}
The algorithm \eqref{eq:algo-rec} is written in a recursive form. We  now rewrite it in an iterative form -using also a new set of notations, in order to stress the fact that it is very similar to an explicit finite difference  scheme, with space step 
$h=n^{-1/2}$ and time step $\delta t=n^{-1}$. 

Let us consider the space grid $\{x_j\}_{j\in\Z}$ defined by $x_j=j/\sqrt{n}$ for any $j\in\Z$, and the scheme $\{U_j^m\}$, for $j\in\Z$, $0\leq m\leq M$, defined by
\begin{equation}
\label{eq:CI-DIF1}
U_j^0=u_0^n(j)=u_0(x_j),\quad j\in\Z
\end{equation}
and, for $0\leq m\leq N-1$,
 \begin{eqnarray}
  \label{eq:DF1}
   U^{m+1}_j&=&\frac12 U^{m}_{j+1}+\frac 1 2U^{m}_{j-1}\quad\text{for}\;\;j\neq 0,\\
   \label{eq:DF2}
   U^{m+1}_0&=&\alpha U^{m+1}_1+(1-\alpha)U^{m+1}_{-1}.
   \end{eqnarray}
It is obvious that \eqref{eq:CI-DIF1}-\eqref{eq:DF2} is equivalent to \eqref{eq:algo-rec}, in other words $U^M_j=v_0(j)$ for any $j\in \Z$.

\vspace{0.2cm}

Let us explain briefly why we may interpret \eqref{eq:CI-DIF1}-\eqref{eq:DF2} as an explicit finite difference  scheme.
The simplest way is to examine the case $k=1$. Then $\alpha=\frac12$ and \eqref{eq:DF1}-\eqref{eq:DF2} becomes
\begin{equation}
\label{eq:DF3}
U^{m+1}_j=\frac {U^{m}_{j+1}+U^{m}_{j-1}}{2}\quad\text{for any}\;\;j\in\Z.\\
\end{equation}
Besides \eqref{eq_cauchy} becomes simply the heat equation
\begin{equation}
\label{eq:chaleur}
\begin{array}{lll}
 \partial_tu(t,x) &=&\dfrac1 2 u^{\prime\prime}(t,x) ,\quad x\in\R,\;\,t>0\\
 \\
 u(0,x)&=&u_0(x)\quad x\in\R
 \end{array}
\end{equation}
Performing an explicit finite difference scheme with space step $h$ and time step $\delta t$ for Eq.  \eqref{eq:chaleur} amounts to considering a space grid $\{x^h_j\}_{j\in\Z}$ defined by $x^h_j=jh$ for any $j\in\Z$, and to compute $\{U_j^m\}$, for $j\in\Z$, $0\leq m\leq M$, by
$$U_j^0=u_0(x^h_j)\quad j\in\Z$$
 and
$$
\frac{U^{m+1}_j-U^m_j}{\delta t}=\frac{U^m_{j+1}-2U^m_j+U^m_{j-1}}{2h^2}\quad\text{for any}\;\;j\in\Z
$$
for any $0\leq m\leq N-1$. Taking $h=n^{-1/2}$ and $\delta t=n^{-1}$ (note that this corresponds to touching the bound giving the CFL condition, see e.g. \cite{allaire}) we get \eqref{eq:CI-DIF1} and \eqref{eq:DF3}. Therefore the interpretation.
In fact it seems that the transition (pseudo)  probabilities of the random walk $S$ suggests how to take into account the transmission condition in \eqref{eq:CT-cauchy} in a finite difference scheme for \eqref{eq_cauchy}, leading to condition~\eqref{eq:DF2}.

\vspace{0.2cm}
Note that by applying the scheme \eqref{eq:CI-DIF1}-\eqref{eq:DF2} we get for any $j\in\Z$, and any $0\leq m\leq N$ an approximation $U_j^m$ of $u(\frac{m}{n},x_j=\frac{j}{\sqrt{n}})$.

For computational purposes we have to consider a PDE problem with a bounded space domain, and this is our PDE of interest \eqref{eq_evol}. 

Firstly, the domain $[-a,a]$ is discretized with a grid $\{x_j\}_{j=-N_a}^{N_a}$ with $N_a=\sqrt{n}\,a$ (we assume this quantity is an integer) and $x_j=j/\sqrt{n}$ for $-N_a\leq j\leq N_a$.

Secondly, we have to adapt \eqref{eq:CI-DIF1} and \eqref{eq:DF2} to a bounded domain (see Algorithm 2 just hereafter) and thirdly,
 we have to take into account  the Dirichlet boundary conditions  by imposing  $U^m_{-N_a}=U^m_{N_a}=0$ for  $1\leq m\leq N$. 

\vspace{0.3cm}

\fbox{\parbox{\textwidth}{
{\bf ALGORITHM 2:} Computation of $\bar{u}_{RW}^n(t,x)$.

\vspace{0.2cm}
{\bf Parameters of the method:} A time horizon $0<T<\infty$ and a discretization order $n\in\N^*$.

We set $N=nT$ and $N_a=\sqrt{n}\,a$ and assume this quantities are integers.

We set $x_j=j/\sqrt{n}$ for $-N_a\leq j\leq N_a$.

\vspace{0.3cm}
{\bf Algorithm:} 1) Set $U_j^0=u_0(x_j)$ for any $-N_a+1\leq j\leq N_a-1$.

\vspace{0.2cm}
2) For $0\leq m\leq N-1$, compute
 \begin{eqnarray*}
   U^{m+1}_j&=&\frac12 U^{m}_{j+1}+\frac 1 2U^{m}_{j-1}\quad\text{for}\;\;-N_a+1\leq j\leq N_a-1\\
   U^{m+1}_0&=&\alpha U^{m+1}_1+(1-\alpha)U^{m+1}_{-1}
       \end{eqnarray*}
   with the convention that $U^m_{-N_a}=U^m_{N_a}=0$.
   
   \vspace{0.3cm}
   
3) Return a piecewise constant function $\bar{u}_{RW}^n(t,x)$ satisfying 
$$\bar{u}_{RW}^n(\frac m n, x_j)=U^m_j,\quad\forall-N_a+1\leq j\leq N_a-1,\;\; \forall 0\leq m\leq M,$$
and
$$\bar{u}_{RW}^n(\frac m n, \pm a)=0, \quad\forall 1\leq m\leq M.$$
   }}

\vspace{0.3cm}
It should be possible to adapt the results of Proposition \ref{prop:conv-pseudo-MA} to prove convergence of the above scheme.
This would require considering the trajectories of  killed scaled pseudo asymmetric random walks, which presents difficulties we have decided not to address in the present paper.

\vspace{0.1cm}
Nevertheless, we  suspect that the function
 $\bar{u}^n_{RW}:[0,T]\times[-a,a]\to\R$ defined by Algorithm 2 should converge towards $u$ the solution of \eqref{eq_evol}:

\begin{equation}
\label{eq:conv-algo2}
\sup_{(t,x)\in[0,T]\times[-a,a]}|u(t,x)-\bar{u}^n_{RW}(t,x)|\xrightarrow[n\to\infty]{}0.
\end{equation}
This will be illustrated by the numerical experiments of Section \ref{sec:experiments}.

\subsection{Scheme inspired by the expression of the fundamental solution involving the transition function of the killed Brownian motion}
\label{ssec:num-kill-bro}

Let $u_0\in C(I;R)$. We denote~$u_0^+=u_0\1_{\R_+}$
and $u_0^-=u_0\1_{\R_-^*}$. Let $x\in (0,a)$, from Lemma \ref{lem:verif} and Eq.~\eqref{eq:trans} we have 
\begin{equation}
\label{eq:combin1}
\begin{array}{lll}
u(t,x)&=&\ds\int_{-a}^au_0(y)\check{p}(t,x,y)dy\\
\\
&=&\ds (1-\beta)\int_{-a}^a u_0^-(y)p^{(-a,a)}_W(t,x,y)dy+(1+\beta)\int_{-a}^a u_0^+(y)p^{(-a,a)}_W(t,x,y)dy\\
\\
&&\ds\hspace{0.4cm}-\beta\int_0^a u_0^+(y)p^{(0,a)}_W(t,x,y)dy\\
\\
&=&(1-\beta)u_{1}(t,x)+(1+\beta)u_{2}(t,x)-\beta u_{3,+}(t,x),\\
\end{array}
\end{equation}
where the functions $u_1$, $u_2$ and $u_{3,+}$ are respectively solution of the following heat equations:
\begin{eqnarray*} 
\left\{ \begin{array}{lcll}
\partial_t u_{1}(t,x) &=& \ds\frac{1}{2}\partial^2_{xx}u_{1}(t,x),
&\quad x \in I, t > 0,
\\[8pt]
u_{1}(0,x) &=& u_0^-(x) & \quad x \in I,
\\[8pt]
u_{1}(t,\pm a) &=& 0, & t > 0,
\end{array}\right.
\end{eqnarray*}
\begin{eqnarray*} 
\left\{ \begin{array}{lcll}
\partial_t u_{2}(t,x) &=& \ds\frac{1}{2}\partial^2_{xx}u_{2}(t,x),
&\quad x \in I, t > 0,
\\[8pt]
u_{2}(0,x) &=& u_0^+(x) & \quad x \in I,
\\[8pt]
u_{2}(t,\pm a) &=& 0, & t > 0,
\end{array}\right.
\end{eqnarray*}
and
\begin{eqnarray*} 
\left\{ \begin{array}{lcll}
\partial_t u_{3,+}(t,x) &=& \ds\frac{1}{2}\partial^2_{xx}u_{3,+}(t,x),
&\quad x \in (0,a), t > 0,
\\[8pt]
u_{3,+}(0,x) &=& u_0^+(x) & \quad x \in (0,a),
\\[8pt]
u_{3,+}(t,0)= u_{3,+}(t,a)&=& 0, & t > 0.
\end{array}\right.
\end{eqnarray*}
Indeed $p^{(-a,a)}_W(t,x,y)$ (resp. $p^{(0,a)}_W(t,x,y)$) may be viewed as the fundamental solution of the heat equation with half Laplacian on the domain $(-a,a)$ (resp. $(0,a)$), with homogeneous Dirichlet boundary conditions~(\cite{borodin}, Appendix I, Nï¿½ 6).

In the same manner, for $x<0$ we have
\begin{equation}
\label{eq:combin2}
u(t,x)=(1-\beta)u_{1}(t,x)+(1+\beta)u_{2}(t,x)+\beta u_{3,-}(t,x),
\end{equation}
with $u_1$ and $u_2$ as before and $u_{3,-}$ the solution of 
\begin{eqnarray*} 
\left\{ \begin{array}{lcll}
\partial_t u_{3,-}(t,x) &=& \ds\frac{1}{2}\partial^2_{xx}u_{3,-}(t,x),
&\quad x \in (-a,0), t > 0,
\\[8pt]
u_{3,-}(0,x) &=& u_0^-(x) & \quad x \in (-a,0),
\\[8pt]
u_{3,-}(t,-a)= u_{3,-}(t,0)&=& 0, & t > 0.
\end{array}\right.
\end{eqnarray*}
Our idea is to perform finite different schemes for $u_1$, $u_2$, $u_{3,\pm}$ and to combine them through \eqref{eq:combin1}\eqref{eq:combin2} in order to get a scheme for the approximation of $u$. We sum up the procedure in Algorithm 3 where we use implicit finite different schemes, which are known to be unconditionably stable (\cite{allaire}). In the present case they are also consistent and therefore convergent by Lax principle.

\vspace{0.3cm}

\fbox{\parbox{\textwidth}{
{\bf ALGORITHM 3:} Computation of $\bar{u}_{fund}^{h,N}(t,x)$.

\vspace{0.2cm}
{\bf Parameters of the method:} A time horizon $0<T<\infty$. 

A time dicretization order $N$ is given and we set $\delta t=T/N$.

A space step $h$ is given and we set $N_a=a/h$ (we assume this is an integer).

We set $x_j=jh$ for $-N_a\leq j\leq N_a$.

\vspace{0.3cm}
{\bf Algorithm:} 1) Set $U_{1,j}^0=u_0^-(x_j)$ for any $-N_a+1\leq j\leq N_a-1$.

Set $U_{2,j}^0=u_0^+(x_j)$ for any $-N_a+1\leq j\leq N_a-1$.

Set $U_{3+,j}^0=u_0^+(x_j)$ for any $1\leq j\leq N_a-1$.

Set $U_{3-,j}^0=u_0^-(x_j)$ for any $-N_a+1\leq j\leq -1$.

\vspace{0.2cm}
2) For $0\leq m\leq N-1$, compute the vectors $U^{m+1}_{1,}$, $U^{m+1}_{2,}$, $U^{m+1}_{3\pm,}$ by applying the implicit finite difference schemes
 \begin{eqnarray*}
   \frac{U^{m+1}_{1,j}-U^{m}_{1,j}}{\delta t}&=&\frac{U^{m+1}_{1,j+1}-2U^{m+1}_{1,j}+U^{m+1}_{1,j-1}}{2h^2}\quad\text{for}\;\;-N_a+1\leq j\leq N_a-1\\
   \frac{U^{m+1}_{2,j}-U^{m}_{2,j}}{\delta t}&=&\frac{U^{m+1}_{2,j+1}-2U^{m+1}_{2,j}+U^{m+1}_{2,j-1}}{2h^2}\quad\text{for}\;\;-N_a+1\leq j\leq N_a-1\\
    \frac{U^{m+1}_{3+,j}-U^{m}_{3+,j}}{\delta t}&=&\frac{U^{m+1}_{3+,j+1}-2U^{m+1}_{3+,j}+U^{m+1}_{3+,j-1}}{2h^2}\quad\text{for}\;\;1\leq j\leq N_a-1\\
    \frac{U^{m+1}_{3-,j}-U^{m}_{3-,j}}{\delta t}&=&\frac{U^{m+1}_{3-,j+1}-2U^{m+1}_{3-,j}+U^{m+1}_{3-,j-1}}{2h^2}\quad\text{for}\;\;-N_a+1\leq j\leq -1\\
   \end{eqnarray*}
   with the conventions that $U^{m+1}_{1,\pm N_a}=U^{m+1}_{2,\pm N_a}=U^{m+1}_{3+, N_a}=U^{m+1}_{3+,0}=U^{m+1}_{3-, -N_a}=U^{m+1}_{3-,0}=0$.
   
   \vspace{0.3cm}
   
3) Return a piecewise constant function $\bar{u}_{fund}^{h,N}(t,x)$ satisfying 
$$\bar{u}_{fund}^{h,N}(m\,\delta t, x_j)=(1-\beta)U^m_{1,j}+(1+\beta)U^m_{2,j}-\beta U^m_{3+,j},\quad\forall 1\leq j\leq N_a-1,\;\; \forall 0\leq m\leq M,$$
$$\bar{u}_{fund}^{h,N}(m\,\delta t, x_j)=(1-\beta)U^m_{1,j}+(1+\beta)U^m_{2,j}+\beta U^m_{3-,j},\quad\forall -N_a+1\leq j\leq -1,\;\; \forall 0\leq m\leq M,$$
$$\bar{u}_{fund}^{h,N}(m\,\delta t, 0)=(1-\beta)U^m_{1,0}+(1+\beta)U^m_{2,0},\quad \forall 0\leq m\leq M,$$
and
$$\bar{u}_{fund}^{h,N}(m\,\delta t, \pm a)=0, \quad\forall 1\leq m\leq M.$$
   }}
   
   \vspace{0.3cm}

From the convergence of the finite difference schemes we  immediately get the following convergence result.

\begin{proposition}
Let us consider $u$ the solution of \eqref{eq_evol}. Let $0<T<0$ and let us consider for any~$N\in\N^*$ and any $h\in(0,a)$ the function $\bar{u}^{h,N}_{fund}:[0,T]\times[-a,a]\to\R$ defined by Algorithm 3. We have
$$
\sup_{(t,x)\in[0,T]\times[-a,a]}|u(t,x)-\bar{u}^{h,N}_{fund}(t,x)|\xrightarrow[h\downarrow 0,\,N\to\infty]{}0.
$$
\end{proposition}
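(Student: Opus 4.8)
The plan is to exploit the exact linear decomposition \eqref{eq:combin1}--\eqref{eq:combin2}, which writes $u$ as a fixed linear combination of the solutions $u_1,u_2,u_{3,\pm}$ of four standard heat equations on subintervals of $(-a,a)$ with homogeneous Dirichlet data, and to observe that Algorithm~3 applies the \emph{same} linear combination to the implicit finite-difference approximations of those four problems. Since the combining coefficients $1-\beta$, $1+\beta$ and $\beta$ are independent of $h$ and $N$, convergence of the full scheme reduces, via the triangle inequality, to convergence of each sub-scheme separately.

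First I would fix $(t,x)\in[0,T]\times[-a,a]$ and let $(m\,\delta t,x_j)$ be the grid node of the cell on which the piecewise-constant function $\bar u^{h,N}_{fund}$ takes the value $\bar u^{h,N}_{fund}(t,x)$. Writing, say for $x>0$,
\[
u(t,x)-\bar u^{h,N}_{fund}(t,x)=\bigl(u(t,x)-u(m\,\delta t,x_j)\bigr)+\bigl(u(m\,\delta t,x_j)-\bar u^{h,N}_{fund}(m\,\delta t,x_j)\bigr),
\]
the first bracket is controlled by the uniform continuity of $u$ on the compact set $[0,T]\times[-a,a]$, since the mesh diameter tends to $0$; the second bracket is, by \eqref{eq:combin1} and the formula for $\bar u^{h,N}_{fund}$ in Algorithm~3, bounded by
\[
|1-\beta|\,|u_1-U^m_{1,j}|+|1+\beta|\,|u_2-U^m_{2,j}|+|\beta|\,|u_{3,+}-U^m_{3+,j}|,
\]
all quantities evaluated at $(m\,\delta t,x_j)$. (The case $x<0$ is identical with $u_{3,-}$ and a $+\beta$ sign, and $x=0$ drops the $u_{3,\pm}$ term.) It then suffices to show that each of the three relevant sub-schemes converges uniformly. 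Each $U^{\cdot}_{i,\cdot}$ is the backward-Euler-in-time, centred-in-space scheme for $\partial_t v=\tfrac12\partial^2_{xx}v$ with homogeneous Dirichlet conditions: such schemes are unconditionally stable in the maximum norm (the iteration matrix is an $M$-matrix, so a discrete maximum principle holds) and have truncation error $O(\delta t)+O(h^2)$ wherever the exact solution is smooth, hence are convergent by the Lax equivalence theorem, as recalled in the text and in \cite{allaire}. Combining the nodal convergence of the sub-schemes with the uniform-continuity estimate above and taking the supremum over $(t,x)$ yields the claim.

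The main obstacle is regularity at the parabolic corner. The initial data of the sub-problems are the truncations $u_0^\pm=u_0\1_{\R_\pm}$, which jump at $x=0$ whenever $u_0(0)\neq0$; consequently $u_1,u_2$ and $u_{3,\pm}$ fail to be smooth up to $t=0$ near the origin, so the naive $O(\delta t)+O(h^2)$ consistency bound degrades there (a similar difficulty arises at $(0,\pm a)$ if $u_0(\pm a)\neq0$). The plain Lax argument then only delivers convergence for $t$ bounded away from $0$. To close this gap I would either (i) observe that the singular contributions of the individual pieces cancel in \eqref{eq:combin1}--\eqref{eq:combin2}, so that the genuinely relevant object $u$ is continuous on the whole rectangle, and transfer this equicontinuity to the discrete level through the stability (maximum-principle) estimate, which controls each scheme by its data; or (ii) approximate $u_0$ by compatible smooth data, apply the smooth convergence together with stability to pass to the limit, and absorb the thin initial layer using the uniform boundedness granted by the discrete maximum principle. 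Either route makes the phrase ``convergent by Lax principle'' fully rigorous despite the corner singularities.
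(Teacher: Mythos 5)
Your proposal follows the paper's own (very terse) proof exactly: the paper likewise reduces the claim, via the decomposition \eqref{eq:combin1}--\eqref{eq:combin2}, to the convergence of the four implicit finite-difference sub-schemes for the heat equation, which it declares unconditionally stable and consistent, ``therefore convergent by Lax principle,'' and then states that the proposition follows ``immediately.'' Your additional discussion of the parabolic-corner difficulty --- the truncated data $u_0^{\pm}$ jump at $x=0$ whenever $u_0(0)\neq 0$, so the naive $O(\delta t)+O(h^2)$ consistency bound degrades near $t=0$ and uniform convergence on all of $[0,T]\times[-a,a]$ requires either the cancellation of the singular parts in the fixed linear combination or an approximation by compatible data combined with the discrete maximum principle --- is a genuine refinement that the paper passes over in silence, and it addresses precisely the point where the paper's one-line argument is thinnest.
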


\begin{remark}
In fact we could infer several other numerical schemes from Eq. \eqref{eq:trans}. For example, consider $x>0$ and $t>0$. 
Eq. \eqref{eq:combin1} also implies that
$$
u(t,x)=(1-\beta)\E^x[u_0^-(W_t);\,t<\tau_{(-a,a)}]+(1+\beta)\E^x[u_0^+(W_t);\,t<\tau_{(-a,a)}] 
- \beta \E^x[u_0^+(W_t);\,t<\tau_{(0,a)}].
$$
So we could consider approaching each of the above expectations by Monte Carlo sums involving samples of independent Brownian motions and the corresponding stopping times $\tau_{(-a,a)}$ or $\tau_{(0,a)}$ (see e.g. \cite{elkaroui-gobet}).

However, as the space dimension  is one, we know that this Monte Carlo method would be slower than the finite difference approach described in Algorithm 3. Nevertheless such an approach could be interesting if we would address the problem in a space of higher dimension.
\end{remark}

\section{Numerical experiments}
\label{sec:experiments}

{\bf Example 1.} We take $I=(-1,1)$ (i.e. $a=1$) and $k=-0.5$. We choose the following initial condition
$$
u_0(x)=\frac{10 x^3 - 3 x^2 - 9x + 4}{2},\quad\forall x\in I.
$$
Indeed in order to use Algorithm 1 we have to compute the $a_n$'s and $b_n$'s, through Eq. \eqref{eq:def-coeff-ab}. By the polynomial nature of the initial condition $u_0$ these coefficients will be made explicit, providing thus a benchmark for the finite difference scheme inspired algorithms (Algo. 2 and 3).

Remember that the quantites $\int_I A(x) |f_n(x)|^2 \,dx$ and $\int_I A(x) |g_n(x)|^2 \,dx$ in Eq. \eqref{eq:def-coeff-ab} are given by Eq. \eqref{eq_coeff}.
Besides one can compute
$$
\begin{array}{l}
  \ds\int_{-1}^1 A(x) f_n(x) u_0(x) \, dx \\
\\ 
=
\ds{\frac {k}{ \left( 2\,n-1 \right) ^{4}{\pi}^{4}} \left( -528\, \left( 
n-1/2 \right) \pi\, \left( -1 \right) ^{n}-480-72\, \left( n-1/2
 \right) ^{2}{\pi}^{2} \right) }
 \\
\hspace{0.4cm}\ds+{\frac {1}{ \left( 2\,n-1 \right) ^{4
}{\pi}^{4}} \left( -528\,\pi\, \left( 1/33\, \left( n-1/2 \right) ^{2}
{\pi}^{2}-{\frac{9}{11}} \right)  \left( n-1/2 \right)  \left( -1
 \right) ^{n}+480+72\, \left( n-1/2 \right) ^{2}{\pi}^{2} \right) }\\
\end{array}
$$
and
$$
 \int_{-1}^1 A(x) g_n(x) u_0(x) \, dx 
\;=\;
-{\frac {k \left( -1 \right) ^{n} \left( {\pi}^{2}{n}^{2}-60 \right) 
}{{\pi}^{3}{n}^{3}}}.
$$
Recalling that here $\lambda_n=\frac{(2n-1)\pi}{2}$ and $\mu_n=n\pi$, $n\geq 1$, we have everything at hand to perform Algorithm~1.

\vspace{0.1cm}
For performing Algorithms 2 and 3 no previous computation is needed.

\vspace{0.2cm}
Figure \ref{fig1} represents the graphs of  $\bar{u}_{spec}^{200}(T,\cdot)$, $\bar{u}_{RW}^{2.5\times 10^5}(T,\cdot)$ and 
$\bar{u}_{fund}^{2\times 10^{-3}\,,\,500}(T,\cdot)$ at $T=0.4$ with the following choices of parameters:
 $N=200$ for Algorithm 1, 
 $n=2.5\times 10^5$ for Algorithm 2,
 a space step $h=2\times 10^{-3}$ and a time discretization order $N=500$ for Algorithm 3.

We see a very good concordance between the three methods (which actually can be observed for coarser discretizations). 

In particular we can numerically check the convergence of the Algorithm 2 announced in Eq. \eqref{eq:conv-algo2}. To that aim we 
consider $\bar{u}_{spec}^{200}(T,\cdot)$ as the reference solution and
report in Table \ref{tab:1} the value of
$$\sup_{x\in[-a,a]}\big|\bar{u}_{spec}^{200}(T,x)-\bar{u}_{RW}^{n}(T,x)\big|$$
for increasing values of $n$, up to $n=2.5\times 10^5$. Convergence is indeed observed.
\vspace{1cm}

\begin{table}[h!]
 \begin{center}
\begin{tabular}{cc}
\hline
&\\
  n  &   $\sup_{x\in[-a,a]}\big|\bar{u}_{spec}^{200}(T,x)-\bar{u}_{RW}^{n}(T,x)\big|$  \\
  \\
 \hline
$100$ & $4.56\times 10^{-2}$ \\
$625$ & $9.53\times 10^{-3}$\\
$10^4$ & $4.4\times 10^{-4}$\\
$2.5\times 10^5$ & $1.7\times 10^{-5}$\\
\hline
\end{tabular}
\caption{Approximation error $\sup_{x\in[-a,a]}\big|\bar{u}_{spec}^{200}(T,x)-\bar{u}_{RW}^{n}(T,x)\big|$ in function of $n$ (the function
$\bar{u}_{spec}^{200}(T,\cdot)$ is considered as the benchmark).}
\label{tab:1} 
\end{center}
\end{table}

\begin{figure}
\begin{center}
\includegraphics[width=13cm]{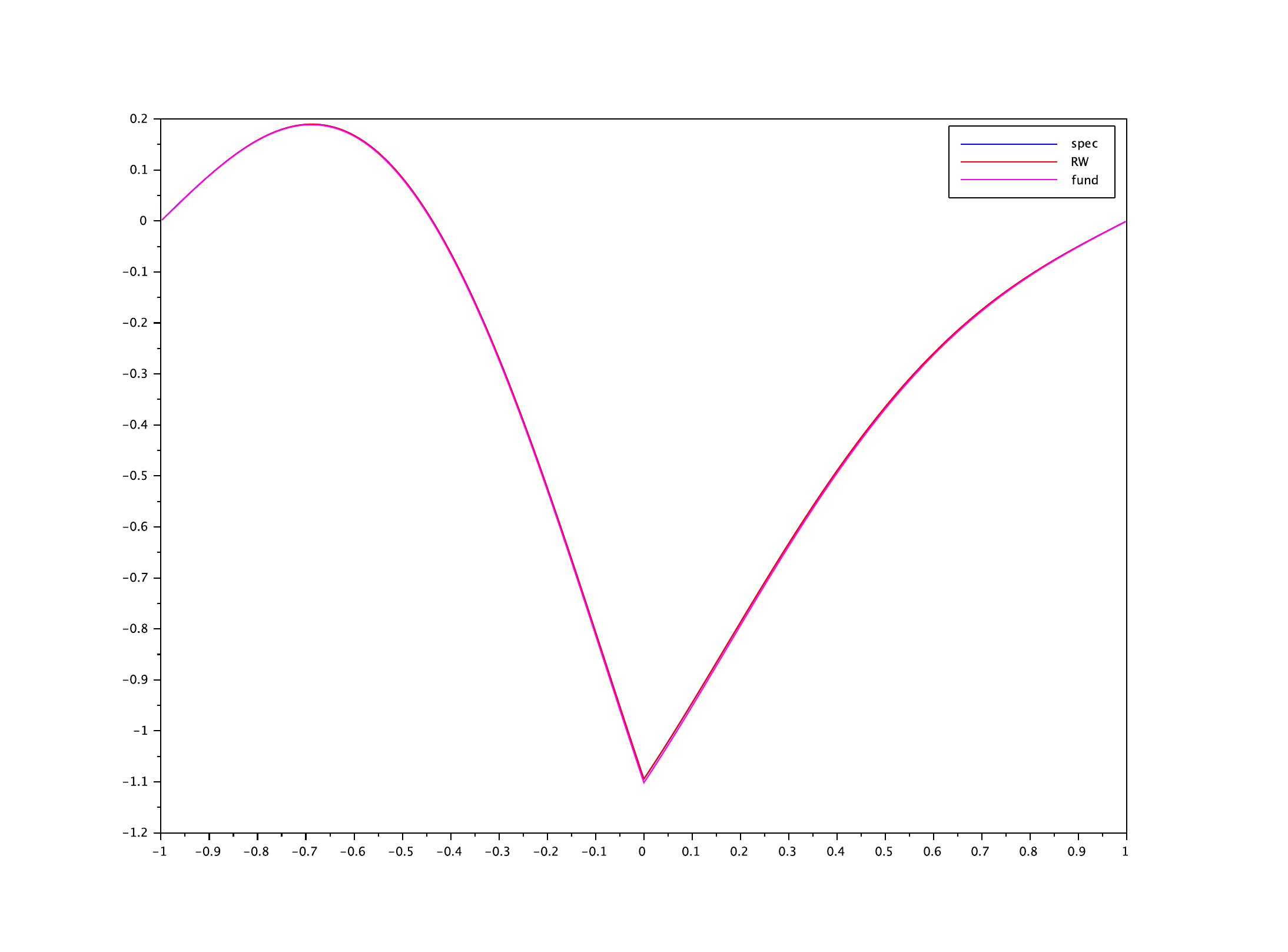}
\caption{ Plot of an approximation of the function $u(T=0.4,\cdot)$, by $\bar{u}_{spec}^{200}(T,\cdot)$, $\bar{u}_{RW}^{2.5\times 10^5}(T,\cdot)$ and 
$\bar{u}_{fund}^{2\times 10^{-3}\,,\,500}(T,\cdot)$, for the initial condition 
$u_0(x)=(10 x^3 - 3 x^2 - 9x + 4)/2$.}
\label{fig1}
\end{center}
\begin{center}
\includegraphics[width=13cm]{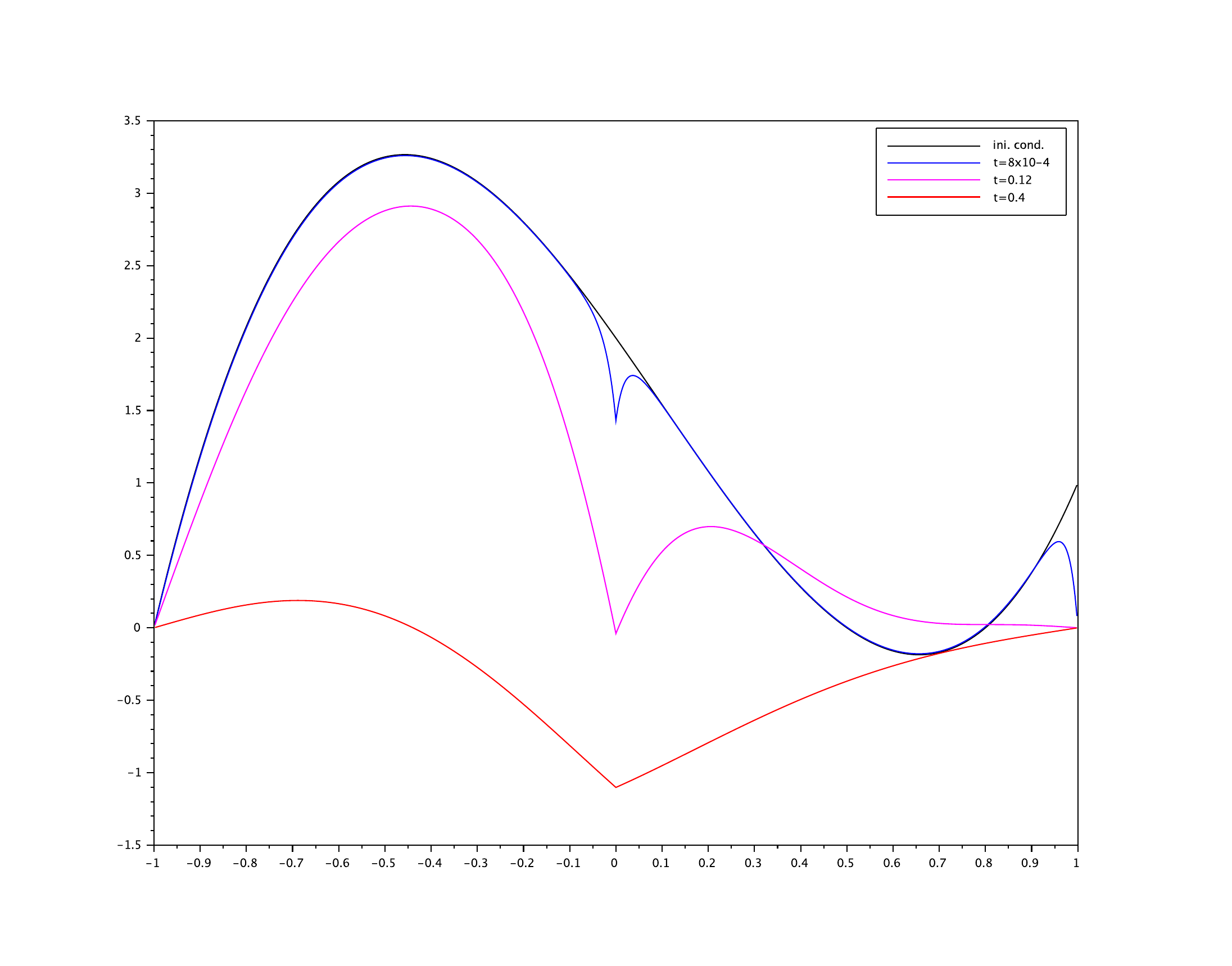}
\caption{ Plot of the initial condition $u_0$ and of an approximation of the function $u(t,\cdot)$, by $\bar{u}_{fund}^{2\times 10^{-3}\,,\,500}(t,\cdot)$, at times $t=8\times 10^{-4}$, $t=0.12$ and $t=0.4$, for the initial condition 
$u_0(x)=(10 x^3 - 3 x^2 - 9x + 4)/2$.}
\label{fig2}
\end{center}
\end{figure}

\vspace{0.3cm}

On Figure \ref{fig2} we check another interesting phenomenon. We plot 
 the initial condition $u_0$ and an approximation of $u(t,\cdot)$ by $\bar{u}_{fund}^{h,N}(t,\cdot)$ (we keep the same parameters $h=2\times 10^{-3}$ and $N=500$ as previously)
 at times $t=8\times 10^{-4}$, $t=0.12$ and~$t=0.4$. Note that $t=8\times 10^{-4}=T/N$ corresponds to the first time step in the finite difference scheme.

Observe that the transmission condition destroys the regularity of the initial datum. Also observe that the slope of the graph of $u(t,\cdot)$ is of negative sign at $0-$, but twice bigger in absolute value than the positive slope at $0+$. This is what we expect as 
$k=-0.5$.

\vspace{0.7cm}
{\bf Example 2.} In this second example we keep $I=(-1,1)$  and $k=-0.5$, but choose the following initial condition
\begin{equation}
\label{eq:def-CI-creneau}
u_0(x)=\1_{x<0}-0.5,\quad\forall x\in I.
\end{equation}
Indeed we want to test the robustness of our numerical schemes to a non smooth initial condition, especially if this initial condition presents a discontinuity at the interface point $x=0$.

Of course for Algorithm 1 the initial condition can be taken in $L^2(I)$ so one knows that the algorithm converges if we take $u_0$ defined by 
\eqref{eq:def-CI-creneau}.

The analysis of probabilistic schemes such as Algorithm 2 usually relies on arguments of convergence in pseudo law, which are usually valid only for smooth functions.  We suspect that the smoothing properties of the operator $\frac{1}{2A}\nabla\cdot(A\nabla\,)$ could be used to prove convergence of Algorithm 2,
 even for non smooth initial conditions.

\vspace{0.2cm}
In order to use Algorithm 1 we have to compute the $a_n$'s and $b_n$'s again, the $\mu_n$'s and $\lambda_n$'s remaining unchanged. Easy computations show that
$$
a_n=-\frac{(-1)^n2(k-1)}{(2n-1)\pi(k+1)}$$
and
$$
b_n=\frac{4}{n\pi(k+1)}\1_{n\text{ is odd}},
$$
for $n\geq 1$.

We take again an order of truncation $N=200$ for Algorithm 1.

Again for the Algorithm 2 we take a discretization order $n=2.5\times 10^5$, and
for the Algorithm 3 we take a space step $h=2\times 10^{-3}$ and a time discretization order $N=500$.

\vspace{0.1cm}
 Figure \ref{fig3} depicts the graphs of $\bar{u}_{spec}^{200}(T,\cdot)$, $\bar{u}_{RW}^{2.5\times 10^5}(T,\cdot)$ and 
$\bar{u}_{fund}^{2\times 10^{-3}\,,\,500}(T,\cdot)$ for $T=0.4$ and again shows very good agreement between the results of the three schemes.

\vspace{0.3cm}

On Figure \ref{fig4} we  plot 
 the initial condition $u_0$ and an approximation of $u(t,\cdot)$ by $\bar{u}_{fund}^{h,N}(t,\cdot)$ (we keep the same parameters $h=2\times 10^{-3}$ and $N=500$ as previously)
 at times $t=0.012$, $t=0.12$ and~$t=0.4$. 
In particular, the plot illustrates the fact that the solution $u(t,x)$ does not satisfy a maximum principle, one of the main arguments classically used in the numerical analysis of deterministic finite difference schemes for parabolic equations.

\begin{figure}
\begin{center}
\includegraphics[width=12.5cm]{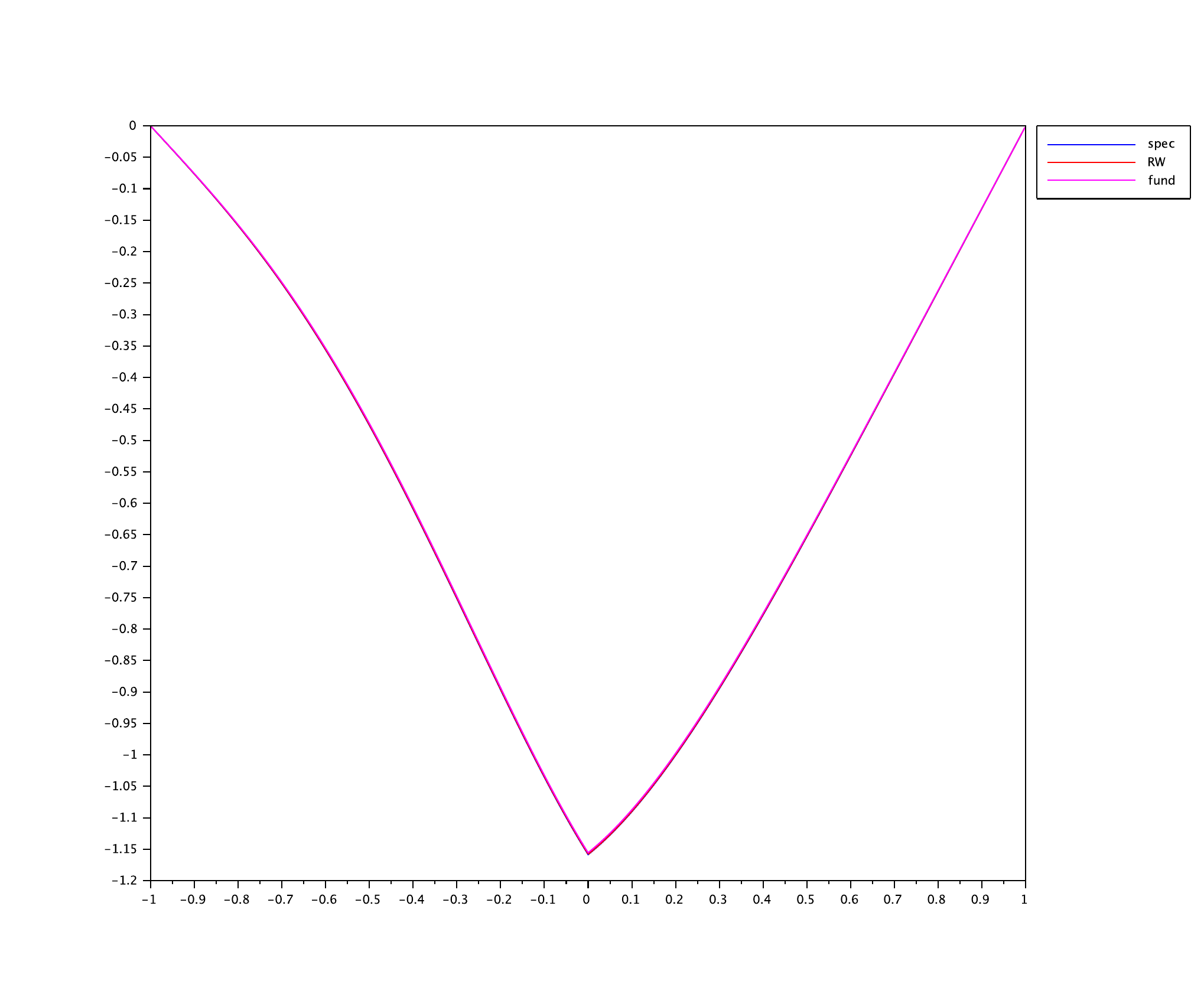}
\caption{ Plot of an approximation of the function $u(T=0.4,\cdot)$, by $\bar{u}_{spec}^{200}(T,\cdot)$, $\bar{u}_{RW}^{2.5\times 10^5}(T,\cdot)$ and 
$\bar{u}_{fund}^{2\times 10^{-3}\,,\,500}(T,\cdot)$, for the initial condition 
$u_0(x)=\1_{x<0}-0.5$.}
\label{fig3}
\end{center}
\begin{center}
\includegraphics[width=12.5cm]{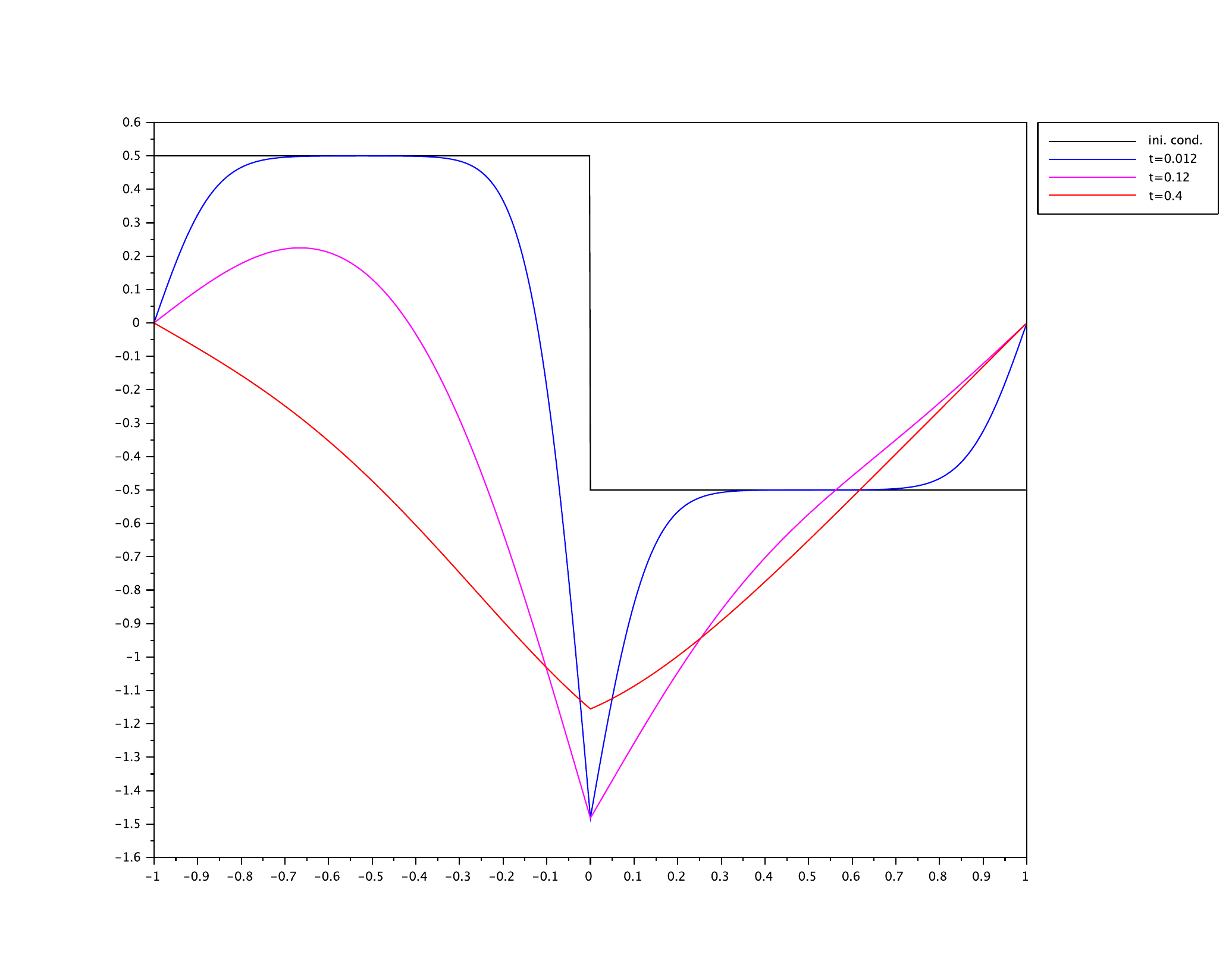}
\caption{ Plot of the initial condition $u_0$ and of an approximation of the function $u(t,\cdot)$, by $\bar{u}_{fund}^{2\times 10^{-3}\,,\,500}(t,\cdot)$, at times $t=0.012$, $t=0.12$ and $t=0.4$, for the initial condition 
$u_0(x)=\1_{x<0}-0.5$.}
\label{fig4}
\end{center}
\end{figure}

\vspace{0.5cm}

{\bf Fundings}

This research has been supported by the IRS projects "FKSC" and "Spectral properties of graphs with negative index materials" (for E. Bonnetier and P. \'Etoré) and by  the Bézout Labex, funded by ANR, reference ANR-10-LABX-58 (for Miguel Martinez).

\vspace{0.1cm}

{\bf Data availability and conflict of interest statements}

No datasets were generated or analysed during the current study.

The authors declare no competing interests.

	 \bibliographystyle{amsplain}
\bibliography{BIB_sign_changing.bib}

\end{document}